\documentclass[a4paper,11pt]{article}
\usepackage{jheppub,xcolor,slashed}

\newtheorem{theorem}{Theorem}[section]
\newtheorem{corollary}{Corollary}[theorem]
\newtheorem{lemma}[theorem]{Lemma}

\DeclareMathOperator{\Ds}{\slashed{\cal{D}}}

\def \be {\begin{equation}}
\def \ee {\end{equation}}
\def \bea {\begin{eqnarray}}
\def \eea {\end{eqnarray}}

\def \rR {{\rm R}}
\def \rNS {{\rm NS}}

\newcommand\HR[1]{{\it \color{purple}\bf  [#1 - HR]}}

\title{\boldmath The BTZ black hole, the Third Law and gravitational collapse
}

\author{Aidan M. McSharry, Harvey S. Reall and Jorge E. Santos}
\affiliation{Department of Applied Mathematics and Theoretical Physics, University of Cambridge\\
Wilberforce Road, Cambridge CB3 0WA, United Kingdom}

\emailAdd{amm323@cam.ac.uk, hsr1000@cam.ac.uk, jss55@cam.ac.uk}

\abstract{
We prove that a Third Law of black hole mechanics holds for the BTZ black hole in 3d gravity coupled to matter obeying the dominant energy condition. This law states that initial data containing a trapped surface cannot evolve in finite time to a black hole that coincides with extremal BTZ on the event horizon, for any choice of boundary conditions at infinity. The result is proved by defining a quasilocal energy and angular momentum and using spinorial methods to establish a BPS inequality that is saturated by the extremal BTZ solution. Nevertheless, we show that gravitational collapse of a massless scalar field can result in the formation of an extremal BTZ black hole in finite time. The difference between these results arises because gravitational collapse occurs in the Neveu-Schwarz sector, whereas extremal BTZ is supersymmetric in the Ramond sector. Solutions that settle down to extremal BTZ in infinite time along the event horizon are also discussed.
}

\begin{document}
\maketitle
\flushbottom

\section{Introduction}
\label{sec:intro}

The Third Law of black hole mechanics asserts that, in classical General Relativity with physically reasonable matter, it is impossible for a non-extremal black hole to become extremal in finite time. This was conjectured in \cite{Bardeen:1973gs} and a proof was presented in \cite{Israel:1986gqz}. However, an error in this proof was identified in \cite{Kehle:2022uvc}, where counterexamples to the Third Law were constructed. These counterexamples are solutions of Einstein-Maxwell theory coupled to a massless charged scalar field. They describe spherically symmetric gravitational collapse to form a black hole that contains a region that is exactly Schwarzschild, inside the Schwarzschild apparent horizon, but then evolves to an exactly extremal Reissner-Nordstr\"om black hole on and outside the event horizon, in finite advanced time. Similar solutions exist for Einstein-Maxwell theory coupled to charged Vlasov matter with a large charge to mass ratio \cite{Kehle:2024vyt}. 

Subsequent work has proved that such solutions do {\it not} exist if Einstein-Maxwell theory is coupled to matter satisfying a certain upper bound on its charge density in terms of its energy density \cite{Reall:2024njy,McSharry:2025iuz}. The proof of this is spinorial, and relies on the fact that the extremal Reissner-Nordstr\"om solution admits a ``supercovariantly constant'' spinor. A similar result holds for supersymmetric Kerr-Newman-AdS$_4$ black holes \cite{McSharry:2025iuz}. This raises the question of whether the validity of the Third Law for a given theory might depend on the type of matter present in that theory. However, it has been conjectured \cite{Kehle:2022uvc} that there exist third-law violating solutions of {\it vacuum} gravity, describing a Schwarzschild black hole that evolves in finite time to an extremal Kerr black hole through absorption of gravitational waves. Recently it has been shown that the analogous conjecture is true in {\it five} spacetime dimensions \cite{Crump:2026kgu}. These results strongly suggest that it is unlikely that there exist 4d (or 5d) theories for which a Third Law holds for all extremal black holes, but there do exist classes of theories for which a Third Law holds for a subset of extremal black holes.

So far we have focused on the Third Law, the question of whether a non-extremal black hole can evolve to become an extremal black hole in finite time. Another question is whether an extremal black hole can form in finite time in gravitational collapse. Although similar, these two questions are independent. For example, it might be impossible to form a certain type of extremal black hole in gravitational collapse but possible to form it from a pre-existing two-sided non-extremal black hole. Or maybe it is impossible for a non-extremal black hole to evolve to a certain type of extremal black hole but nevertheless such an extremal black hole can form in collapse. However, in the situations considered in \cite{Kehle:2022uvc,Kehle:2024vyt,Reall:2024njy,McSharry:2025iuz,Crump:2026kgu} the answers to these two questions coincide. For \cite{Kehle:2022uvc,Kehle:2024vyt,Crump:2026kgu} the answer is positive and for \cite{Reall:2024njy,McSharry:2025iuz} it is negative. 

In this paper we will show that the answers to these questions are {\it different} for gravity in 3d with negative cosmological constant, which admits the BTZ family of black hole solutions \cite{Banados:1992wn}. We will adapt the methods of \cite{Reall:2024njy,McSharry:2025iuz} to prove a Third Law for this theory:

\begin{theorem} \emph{(Third Law for BTZ.)}
\label{thm:3rdlaw}
Consider a 3d smooth spacetime satisfying Einstein's equation with negative cosmological constant and matter satisfying the dominant energy condition. Consider a smooth spacelike hypersurface $\Sigma$ with the topology of an annulus with outer boundary $S$ and inner boundary $T$ (see Fig. \ref{fig:btz third law}). Assume that the future-directed ingoing null geodesics normal to $S$ are strictly converging and that $T$ is a weakly future outer trapped circle.\footnote{Weakly future outer trapped means that the future-directed outgoing null geodesics normal to $T$ have non-positive expansion, where ``outgoing'' on $T$ (and ``ingoing'' on $S$) means ``pointing into $\Sigma$''.}
Then the spacetime metric and extrinsic curvature on $S$ cannot coincide with those on a cross-section of the event horizon of an extremal BTZ black hole.
\end{theorem}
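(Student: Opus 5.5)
We argue by contradiction, following the spinorial strategy of \cite{Reall:2024njy,McSharry:2025iuz}. Suppose the metric and extrinsic curvature on $S$ coincide with those on a horizon cross-section of extremal BTZ with parameters $(M,J)$, $M\ell = |J|$. We will define a quasilocal energy $E[S]$ and angular momentum $\mathcal{J}[S]$ that depend only on the data on $S$ (its induced metric, the connection on its normal bundle, and the expansions of its normal null directions). For the extremal BTZ horizon cross-section these should reduce to the BTZ values, so that $E\ell = |\mathcal{J}|$ holds exactly on $S$. The contradiction will come from a strict BPS-type inequality $E\ell > |\mathcal{J}|$, valid whenever $S$ bounds a region $\Sigma$ whose inner boundary $T$ is weakly outer trapped and on which the ingoing expansion at $S$ is strictly negative.

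The core step is a Witten--Nester argument on $\Sigma$ with the supercovariant derivative $\hat\nabla_a \epsilon = \nabla_a\epsilon + \frac{1}{2\ell}\gamma_a\epsilon$ of 3d AdS supergravity. We choose the \emph{Ramond} (periodic) spin structure around the annulus, since extremal BTZ admits a supercovariantly constant spinor precisely in that sector. Integrating the Nester--Witten form $\bar\epsilon\gamma^{abc}\hat\nabla_c\epsilon$ over $\Sigma$ and using Einstein's equation gives an identity of the form
\[
\oint_S (\ldots) - \oint_T (\ldots) = \int_\Sigma \left( |\hat\nabla\epsilon|^2 + T_{ab}\,\xi^a n^b \right),
\]
where $\xi^a = \bar\epsilon\gamma^a\epsilon$ is causal and future-directed. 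The volume term is nonnegative once $\epsilon$ solves the Dirac--Witten equation $\gamma^i\hat\nabla_i\epsilon = 0$ on $\Sigma$, by the dominant energy condition.

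For the boundary conditions we impose a chirality/APS-type projection on $T$, using the null normals. With this projection, weak outer trappedness of $T$ makes the $T$ boundary term have a favourable sign, so it can be dropped. On $S$ we prescribe $\epsilon$ to agree, in its projected part, with the Killing spinor of extremal BTZ, or with a general Ramond spinor parametrised by a 2-component constant. The $S$ boundary term then becomes a quadratic form whose eigenvalues are $E\ell \pm \mathcal{J}$. This form is arranged to be nonnegative by the holographic-style bookkeeping, and the strictly negative ingoing expansion makes the bound strict. Equality would force $\hat\nabla\epsilon=0$ on $\Sigma$, which gives a supercovariantly constant spinor. This is incompatible with a trapped $T$ together with strict convergence on $S$, or alternatively the strictness enters directly through the $S$ term. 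Either way we reach the contradiction.

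The main obstacle is the analytic one: proving existence of a solution to the Dirac--Witten equation on the compact annulus with these mixed boundary conditions. The conditions are Ramond-periodic, use a trapped-surface projection at $T$ and prescribed data at $S$. I would set this up as an elliptic boundary value problem of APS/MIT-bag type, following \cite{McSharry:2025iuz}. I would establish coercivity from the same integral identity: the sign of the $T$ term under the trapping hypothesis gives injectivity, and Fredholm index zero then gives existence. A secondary difficulty is choosing the definition of $E,\mathcal{J}$ so that the $S$ boundary term is exactly a function of the data on $S$ and evaluates to the extremal BTZ values there. In 3d the circle $S$ has no intrinsic curvature, so the mean curvatures and the twist of the normal bundle must carry all the information. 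I would fix this normalisation by computing the boundary term directly on BTZ.
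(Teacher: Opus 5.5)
Your overall strategy matches the paper's: a Witten argument on the annulus with the Ramond spin structure, a chiral projection on $T$ that makes the $T$ boundary term non-negative by weak trapping, and data on $S$ taken from the extremal BTZ Killing spinor. However, the step meant to deliver the contradiction fails. There is no strict inequality $E\ell>|\mathcal{J}|$ under your hypotheses. Extremal BTZ itself, with $T$ a cross-section of the event horizon (marginally, hence weakly, outer trapped) and $S$ any strictly convergent circle outside it, saturates the R-sector bound (Corollary~\ref{cor:btz}).

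Strict convergence on $S$ cannot make the bound strict. It only makes the term $\int_S(-\frac{1}{2}\theta_{\rm in})|\tilde\psi_+-\psi_+|^2$ in \eqref{psitildepsi} non-negative, and that term vanishes precisely when the Witten spinor coincides with $\psi$ on $S$. What equality actually yields is rigidity:
\begin{itemize}
\item $h_a^b\nabla^\pm_b\psi=0$ on $\Sigma$;
\item matter is at most null dust along $X^a=\bar\psi\Gamma^a\psi$;
\item $\theta_{\rm in}\equiv 0$ on $T$, since a point of $T$ with $\theta_{\rm in}<0$ forces $\psi_+=0$ there, hence $\psi=0$ there, hence $\psi\equiv 0$.
\end{itemize}
So your argument only disposes of the case in which $T$ is strictly trapped somewhere. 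The marginally trapped case is permitted by the theorem's hypotheses. The BTZ example shows that a supercovariantly constant spinor is compatible with a marginally trapped $T$ and strict convergence on $S$, so your assertion that these are incompatible is false.

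The missing ingredient is that the horizon data on $S$ include $\theta_{\rm out}\equiv 0$, and this must be used in the rigid case together with $\Lambda<0$. In the paper's argument, $I^\pm_S[\psi]=0$ for the parallel spinor reduces, by \eqref{selfadjoint} and $\theta_{\rm out}=0$, to $-\frac{1}{2}\int_S\theta_{\rm in}|\psi_+|^2=0$. Hence $\epsilon\psi=-\psi$ on $S$, while the boundary condition on $T$ gives $\epsilon\psi=\psi$ there. One then assumes $\psi$ extends to a parallel spinor on $D(\Sigma)$; the paper only sketches this step and defers a proof to forthcoming work. With that, $X^a$ is a causal Killing vector with $(\star{\rm d}X)_1\propto\bar\psi\epsilon\psi=0$ on both $S$ and $T$. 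The null-dust rigidity gives $R_{ab}X^b=2\Lambda X_a$, so Stokes' theorem yields $0=\int_\Sigma R_{ab}n^aX^b=2\Lambda\int_\Sigma n_aX^a$. This is impossible because $n_aX^a<0$ and $\Lambda<0$. Your proposal uses neither $\theta_{\rm out}=0$ on $S$ nor $\Lambda\neq 0$ in the equality analysis, and without them it cannot close.

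A smaller point: for a non-axisymmetric horizon cross-section you should not expect the quasilocal charges to reduce to the BTZ parameters. What you actually need, and what transplanting the Killing spinor gives, is a periodic $\psi$ on $S$ with $\nabla^\pm_1\psi=0$, so that \eqref{hol1} holds and $I^\pm_S[\psi]=0$.
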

Following \cite{Israel:1986gqz} we assume that the notion of ``non-extremal'' implies the presence of a trapped surface.\footnote{This does not preclude the possibility that an extremal black hole might also have a trapped surface e.g. deep in its interior, as in \cite{Kehle:2022uvc}.} So this theorem asserts that if a black hole is non-extremal at some time then it cannot evolve to become extremal BTZ at the event horizon in finite time. See Fig \ref{fig:btz third law}. The theorem is independent of any choice of boundary conditions at timelike infinity.

\begin{figure}
    \centering
    \includegraphics[width=0.6\linewidth]{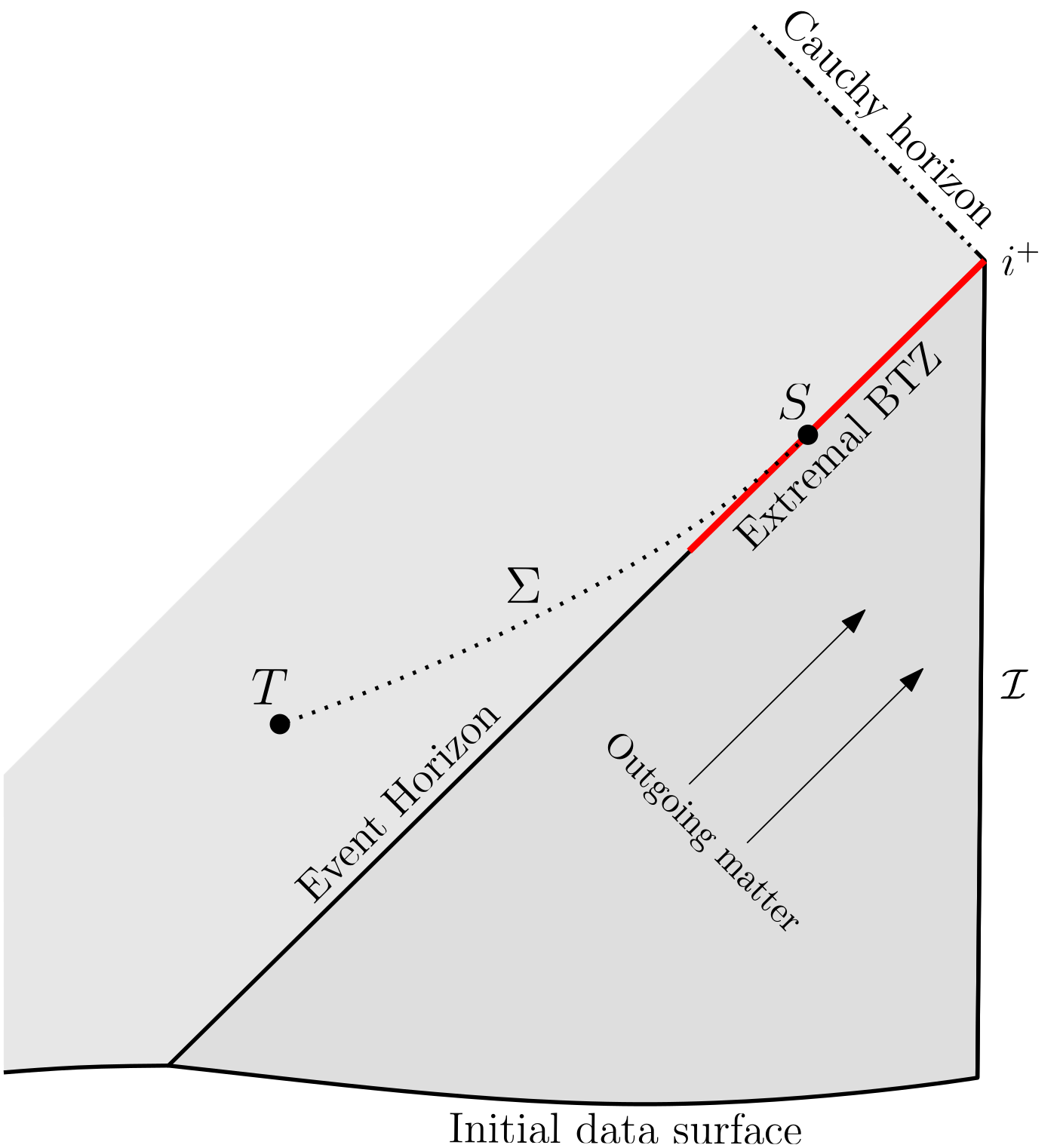}
    \caption{Penrose diagram showing a Third Law violating spacetime in which an initially non-extremal black hole evolves in finite advanced time to form a black hole that is exactly extremal BTZ at the event horizon.
    Theorem \ref{thm:3rdlaw} proves that this spacetime cannot exist if matter obeys the dominant energy condition. 
    The red line represents the portion of the event horizon on which the metric and extrinsic curvature of any cross-section are diffeomorphic to those of an extremal BTZ horizon cross-section. Here $S$ is a cross-section of this portion of the event-horizon, $T$ is a spacelike, outer trapped surface and $\Sigma$ is a spacelike hypersurface.
    The arrows denote outgoing matter. (We do not assume anything about boundary conditions at infinity so this need not be reflected back into the hole.) This figure is schematic as we do not assume axisymmetry.}
    \label{fig:btz third law}
\end{figure}

The proof of Theorem \ref{thm:3rdlaw} introduces ideas that may have other applications. Similarly to \cite{Reall:2024njy}, we will present spinorial definitions of {\it quasilocal} energy and angular momentum for 3d gravity with a negative cosmological constant. Given a spacelike circle $S$, these quantify the total energy and angular momentum ``contained within'' $S$. For a circle of axisymmetry in a BTZ black hole spacetime, these recover the usual BTZ mass and angular momentum parameters. For a general axisymmetric spacetime they also agree with standard definitions \cite{Brown:1986nw} of energy and angular momentum of asymptotically AdS$_3$ spacetimes when $S$ is taken to infinity. Importantly, we will show that, under certain conditions, our quasilocal quantities satisfy a BPS bound. More precisely, we define two sets of quasilocal quantities, referred to as ``Ramond'' and ``Neveu-Schwarz'', depending on whether certain spinors are periodic or anti-periodic around $S$. We prove that each set of quantities satisfies a BPS bound under suitable circumstances. 

The proof of Theorem \ref{thm:3rdlaw} is based on the fact that a circle in the extremal BTZ spacetime saturates our Ramond sector quasilocal BPS bound. If one tries to use similar methods to exclude spacetimes describing gravitational collapse to form an extremal BTZ black hole in finite time then one runs into the following problem. In this situation, one would take $\Sigma$ to be a disc with boundary $S$, and no inner boundary. In this case, to prove a BPS bound we need the spin structure on $S$ to extend to the disc, which is only true for the anti-periodic spin structure. Hence in this setup we can prove a BPS bound only for the Neveu-Schwarz quasilocal quantities. But this bound is {\it not} saturated by extremal BTZ. In other words, gravitational collapse takes place in the Neveu-Schwarz sector but extremal BTZ is supersymmetric only in the Ramond sector. Hence spinorial arguments do not exclude the existence of solutions describing gravitational collapse to extremal BTZ in finite time. 

We will show that solutions {\it do} exist. We will adapt the methods of \cite{Kehle:2022uvc} to construct solutions that describe the gravitational collapse of a massless scalar field to form an exactly extremal BTZ black hole in finite advanced time. Theorem \ref{thm:3rdlaw} implies that these solutions do not possess trapped circles. This makes these solutions candidates for ``critical'' solutions: solutions on the boundary in moduli space between solutions that form black holes and solutions that do not.\footnote{This is simplest to study for the case of ``transparent'' boundary conditions at infinity, for which non-black hole solutions are expected to disperse. With reflecting boundary conditions one would have to think more carefully about how to define criticality.} 
For some models it has been shown that part of this boundary corresponds to the extremal black hole threshold \cite{Murata:2013daa,Kehle:2024vyt,Angelopoulos:2026bez}. This part of the boundary is a hypersurface in moduli space: the critical surface. Solutions on this hypersurface describe gravitational collapse to form a black hole which ``settles down'' to an extremal black hole. In models with particle-like matter this settling down can occur in finite time \cite{Kehle:2024vyt} but with wave-like matter it generically occurs in infinite time \cite{Murata:2013daa, Angelopoulos:2026bez}. General arguments imply that a critical solution cannot possess a trapped surface. 

Our solutions have the property of forming an exactly extremal black hole in finite time. However, one would expect a ``generic'' critical solution to settle down to an extremal black hole only asymptotically, at late time, as first seen in \cite{Murata:2013daa}. 
We will present evidence for the existence of such solutions in 3d, assuming that the late time rate of decay of the scalar field along the event horizon coincides with that of a test field in the extremal BTZ spacetime. With the same assumption, we can also exclude the existence of solutions that ``asymptotically violate'' the Third Law, i.e., solutions with an initial trapped circle that asymptotically settle down to extremal BTZ.

This paper is organized as follows. Section \ref{sec:thirdlaw} introduces spinorial definitions of quasilocal energy and angular momentum and uses them to prove Theorem \ref{thm:3rdlaw}. Section \ref{sec:gluing} describes our construction of solutions describing gravitational collapse of a massless scalar to form an extremal BTZ black hole and the discussion of solutions approaching extremal BTZ asymptotically at late time. Section \ref{sec:discussion} contains further discussion of our results.

\section{Third Law}

\label{sec:thirdlaw}

\subsection{Spinor preliminaries}

We assume that spacetime is smooth, three-dimensional, oriented and time-oriented\footnote{this is sufficient to ensure that any spacetime we consider admits a spin structure, as every orientable, time-orientable 3d spacetime is spinnable (i.e. admits at least one spin structure) \cite{Geroch:1968zm, parker_theorems_1984}.}. The metric has positive signature. Latin indices $a,b,c$ are abstract indices, Greek indices refer to an orthonormal basis. Latin indices $i,j,k$ refer to spacelike basis vectors. We use units so that the Newton constant is $G=1$ and assume a negative cosmological constant
\be
 \Lambda = - \frac{1}{\ell^2}
\ee
where $\ell>0$ is the radius of curvature of the AdS$_3$ solution. We will work with $2$-component complex spinors. The gamma matrices satisfy 
\be
 \{ \Gamma^\alpha, \Gamma^\beta \} = -2 \eta^{\alpha \beta}
\ee
where $\eta^{\alpha\beta} = {\rm diag}(-1,1,1)$, $\Gamma^0$ is Hermitian and $\Gamma^i$ ($i=1,2$) anti-Hermitian. As usual we define
\be
 \bar{\psi} = \psi^\dagger \Gamma^0
\ee
In 3d there are two inequivalent, irreducible representations $\pm \Gamma^\mu$ for the gamma matrices. We choose
\be
 \Gamma^0 = \sigma^2 \qquad \Gamma^1 = i\sigma^3 \qquad \Gamma^2 = i\sigma^1
\ee
where $\sigma^i$ are the Pauli matrices.
This representation satisfies $\Gamma^2\Gamma^1\Gamma^0 = i$.
Associated with the existence of the inequivalent representations for the gamma matrices are two ``supercovariant'' derivatives \cite{Coussaert:1993jp}
\be
\label{supercovdef}
 \nabla^\pm_\mu \equiv \nabla_\mu \pm \frac{i}{2\ell} \Gamma_\mu
\ee 
where $\nabla_\mu$ is the Levi-Civita connection.

We also use the notation
\begin{equation}
\Gamma_{\mu\nu}\equiv \Gamma_{[\mu}\Gamma_{\nu]}
=\frac{1}{2}[\Gamma_\mu,\Gamma_\nu]\,,
\end{equation}
where antisymmetrisation is with unit weight.

\subsection{Quasilocal quantities}\label{sec:quasilocal quantities}

Let $\Sigma$ be a smooth compact spacelike hypersurface with boundary $\partial \Sigma$. We choose an orthonormal basis $\{e_0^a,e_1^a,e_2^a \}$ such that $e_0^a$ is the future-directed unit normal to $\Sigma$ and, on $\partial \Sigma$, $e_1^a$ is tangent to $\partial \Sigma$, and $e_2^a$ is the outward unit normal to $\partial \Sigma$ within $\Sigma$. On any connected component $C$ of $\partial \Sigma$ we define the hermitan matrix
\be
 \epsilon = \Gamma^2 \Gamma^0
\ee
with eigenvalues $\pm 1$. A spacetime spinor on $C$ can be decomposed into two spinors with opposite eigenvalues w.r.t. $\epsilon$:
\be
 \psi = \psi_+ + \psi_- \qquad \qquad \epsilon \psi_{\pm} = \pm \psi_{\pm}
\ee
Next we define a pair of future-directed null normals to $C$
\be
 l^a = \frac{1}{\sqrt{2}} \left( e_0^a + e_2^a \right) \qquad n^a = \frac{1}{\sqrt{2}} \left( e_0^a - e_2^a \right) 
\ee
so $l^a$ points ``out of'' $\Sigma$ and $n^a$ points ``into'' $\Sigma$.
We extend $l$ and $n$ off of $C$ as tangents to affinely parametrised null geodesics from each point of $C$, generating affinely parametrised null geodesic congruences with tangents $l$ and $n$.
For each congruence, we let $\theta_{\rm out}$ ($\theta_{\rm in}$) be the expansion on $C$ of the null geodesics with tangent $l^a$ ($n^a$) on $C$
\begin{equation}
    \theta_{\rm out} = \sqrt{2}\nabla_al^a \qquad \qquad \theta_{\rm in} = \sqrt{2}\nabla_a n^a
\end{equation}

Changing $\Sigma$ with $C$ fixed induces a $SO(1,1)$ Lorentz boost on the basis vectors $\{e_0^a,e_2^a\}$ normal to $C$. 
This has the effect $l^a \rightarrow f l^a$, $n^a \rightarrow f^{-1} n^a$ for some positive function $f$ on $C$, which we shall, henceforth, refer to as a \textit{boost}.
An object $T$ is said to have {\it boost weight} $b$ iff under a boost it transforms as $T\mapsto f^{b}T$.
As an example, $\theta_{\text{in}}$ has boost weight $b=-1$, while $\psi_{\pm}$ have $b=\pm1/2$. We define a boost-covariant derivative as \cite{Durkee:2010xq}
\begin{equation}
    {\cal D} \equiv D_1 - b \chi \qquad \qquad \chi \equiv -e_1^a n^b \nabla_a l_b
\end{equation}
where $D_a$ is the (flat) Levi-Civita connection on $C$ (and $D_1 \equiv e_1^a D_a$). Under a boost $\chi \rightarrow \chi + f^{-1} e_1^b \nabla_b f$, which ensures that ${\cal D}$ transforms covariantly.
We can also write
\be
 \chi = K_{12}
\ee
where $K_{ij}$ ($i,j=1,2$) is the extrinsic curvature of $\Sigma$.
By explicit computation one can verify that ${\cal D}$ is not only boost-covariant, but is also a derivation on objects of definite boost-weight.
Additionally, it satisfies the product rule when acting on polynomial functions of boost-weighted scalars.
We also define a ``slashed'' version of this operator
\begin{equation}
    \Ds \equiv \Gamma^2 \Gamma^1 {\cal D}
\end{equation}
The operator $\slashed{\cal D}$ is self-adjoint in the sense that, for any spinors $\eta,\psi$ defined near $C$, with opposite boost weights,
\be
\label{selfadjoint}
 \int_{C} \eta^\dagger   \slashed{\cal D} \psi =  \int_{C}  (\slashed{\cal D}\eta)^\dagger   \psi
\ee
The starting point for our definitions of quasilocal energy and angular momentum is the Witten identity for the supercovariant derivatives \eqref{supercovdef}:
\be
\label{wittenid}
\int_{\partial \Sigma} \psi^\dagger \Gamma^{2} \Gamma^1  \nabla^\pm_1 \psi=
 \int_\Sigma \left[  (\nabla_i^\pm \psi)^\dagger \nabla_i^\pm \psi +  4\pi G T_{0\mu} \bar{\psi} \Gamma^\mu \psi  - (\Gamma^i  \nabla_i^\pm \psi)^\dagger \Gamma^j  \nabla_j^\pm \psi\right] 
\ee
where $\psi$ is an arbitrary spinor defined in a neighbourhood of $\Sigma$. The integrals are defined w.r.t. the induced volume forms on $\partial\Sigma$ and $\Sigma$ respectively. The vector $\bar{\psi} \Gamma^\mu \psi$ is causal (or zero) so the dominant energy condition implies that the term $ T_{0\mu} \bar{\psi} \Gamma^\mu \psi$ is non-negative. Hence the RHS of \eqref{wittenid} is non-negative if the spinor satisfies the Witten equation 
$\Gamma^i  \nabla_i^\pm \psi = 0$ in $\Sigma$. However, following \cite{ludvigsen1983momentum,Dougan:1991zz}, we will {\it not} impose this equation on $\psi$ directly. We mention it just to motivate the idea that the LHS of \eqref{wittenid} might enjoy some positivity property. This LHS can be written as a sum over connected components of $\partial \Sigma$. For a connected component $C$ we define
\be
 I_C^\pm[\psi] \equiv \int_{C} \psi^\dagger \Gamma^{2} \Gamma^1  \nabla^\pm_1 \psi
\ee
and we can write this in terms of the derivative $\slashed{\cal D}$ defined above:
\begin{equation}
\label{ICexpr}
    I_{C}^\pm[\psi] = \int_{C} \left[  \psi_+^\dagger\left(  \slashed{\cal D} \psi_-  + \frac{1}{2} \theta_{\rm in}  \psi_+ \mp \frac{i}{2\ell}\Gamma^{2} \psi_-\right)+ \psi_-^\dagger \left(  \slashed{\cal D} \psi_+  - \frac{1}{2} \theta_{\rm out}   \psi_-  \mp \frac{ i}{2\ell}\Gamma^{2} \psi_+\right)\right]
\end{equation}
Now we focus on a particular connected component $S$ of $\partial \Sigma$ which we will regard as the ``outer'' boundary of $\Sigma$, for which we wish to define quasilocal quantities. We will assume that $S$ satisfies the convexity condition that {\it the future-directed ingoing null geodesics normal to $S$ are strictly converging on $S$}:
\be
\label{convex}
\theta_{\rm in}<0 \;\; {\rm on} \;\; S
\ee
We will restrict attention to spinors $\psi$ satisfying the following equation on $S$:
\begin{equation}
\label{hol1}
   \frac{1}{2}(1+\epsilon) \nabla^\pm_1 \psi \equiv     \slashed{\cal D} \psi_-  + \frac{1}{2} \theta_{\rm in}  \psi_+ \mp  \frac{i}{2\ell}\Gamma^{2} \psi_-=0 \;\; {\rm on} \;\; S
\end{equation}
The sign choice $\pm$ is held fixed: we are imposing only one equation on $\psi$. Since $\theta_{\rm in} \ne 0$ this equation uniquely determines $\psi_+$ in terms of $\psi_-$ on $S$. This is a 3d analogue of an equation used in \cite{ludvigsen1983momentum,Dougan:1991zz}. Following \cite{ludvigsen1983momentum,Dougan:1991zz}, we will show below, using \eqref{wittenid}, that if $\psi$ satisfies this equation and $\partial \Sigma$ satisfies certain conditions (e.g. if $\partial \Sigma = S$) then $I_S^\pm[\psi]$ is non-negative. 
Consider the following inner product on spinors with $b=-1/2$
\be
\label{eq:boost inv ip}
 \langle \eta,\phi \rangle = \int_S \frac{2}{-\theta_{\rm in}} \eta^\dagger \phi
\ee
Since $\theta_{\rm in}$ has $b=-1$, this inner product is boost invariant.
Substituting \eqref{hol1} into \eqref{ICexpr} gives
\be
\label{IO}
 I^\pm_S[\psi] = \langle\psi_-, {\cal O}^\pm \psi_-\rangle
\ee
where we have defined the following operators on $S$ that act on spinors of definite boost weight
\begin{equation}
\label{Odef}
    {\cal O}^{\pm}  = \left({\cal D} \pm \frac{i}{2\ell}\Gamma^1\right)^\ddagger \left({\cal D}\pm\frac{i}{2\ell}\Gamma^1\right) + \frac{\theta_{\rm in}\theta_{\rm out}}{4}
\end{equation}
Here, $\ddagger$ represents the formal adjoint in the inner product \eqref{eq:boost inv ip}, which is
\begin{equation}
    \left({\cal D}\pm \frac{i}{2\ell}\Gamma^1\right)^\ddagger \equiv -{\cal D} + \frac{1}{\theta_{\rm in}}\left({\cal D}\theta_{\rm in}\right) \pm \frac{i}{2\ell} \Gamma^1
\end{equation}
These are elliptic operators that are covariant under $SO(1,1)$ and hence their spectra are $SO(1,1)$ invariants and thus intrinsic to $S$.
These operators are self-adjoint in the boost-invariant inner-product \eqref{eq:boost inv ip} from which it follows that the eigenvalues are real.

Furthermore ${\cal O}^\pm$ commutes with $\epsilon$ so it preserves the property $\epsilon=-1$. An eigenfunction must be either periodic or anti-periodic around $S$. We refer to these as ``Ramond (R) sector'' and ``Neveu-Schwarz (NS) sector'' eigenfunctions respectively. 
We define $\Lambda^{\rm R}_\pm$ ($\Lambda^{\rm NS}_\pm$) to be the {\it lowest}\footnote{
Note that \eqref{Odef} shows that $\inf_S (\theta_{\rm in} \theta_{\rm out}/4)$ is a lower bound for the eigenvalues.} eigenvalue living in the R (NS) sector respectively:
\bea
 {\cal O}^\pm \psi_- &=& \Lambda^{\rm R}_\pm \psi_- \qquad \qquad  \psi_- \; {\rm periodic} \nonumber \\
{\cal O}^\pm \psi_- &=& \Lambda^{\rm NS}_\pm \psi_-  \qquad \qquad \psi_- \; {\rm antiperiodic}
\eea
Now we define the {\it non-extremalities} in each sector as
\be
\label{nonextdef}
\Delta^{\rm R}_\pm = r^2 \Lambda^{\rm R}_\pm \qquad \qquad  \Delta^{\rm NS}_\pm = r^2 \Lambda^{\rm NS}_\pm
\ee
where $r$ is the ``circumference-radius'' of $S$, i.e., the proper length around $S$ is $2\pi r$. There are two non-extremalities for each sector, corresponding to the different choices for the label $\pm$. This is familiar for asymptotic charges in AdS$_3$, where they describe ``left-moving'' and ``right-moving'' excitations in each sector. Here we have recovered this structure in a quasi-local setting. Finally our quasi-local energy and angular momentum are defined for each sector as
\be
 E^{\rm R/NS} = \frac{1}{4G} \left(\Delta^{\rm R/NS}_+ + \Delta^{\rm R/NS}_- \right)-\frac{\delta^{\rm R/NS}}{8G} \qquad J^{\rm R/NS} = \frac{\ell}{4G} \left(\Delta^{\rm R/NS}_+ - \Delta^{\rm R/NS}_- \right)
\ee
where $\delta^{\rm R}=0$ and $\delta^{\rm NS} = 1$.

Our definitions required the convexity condition \eqref{convex}, i.e., that the ingoing null geodesics from $S$ are strictly converging. Just as in \cite{Dougan:1991zz}, one can formulate analogous definitions if, instead of \eqref{convex}, we impose $\theta_{\rm out}>0$, i.e., that the outgoing null geodesics from $S$ are strictly expanding. This is discussed in Appendix \ref{sec:alternate_defs}.

We have defined {\it two} pairs of quasilocal quantities, corresponding to the labels R and NS. However, in axisymmetry these definitions coincide:
\begin{lemma}
\label{lem:axisym}
 Assume that $S$ coincides with a spacelike orbit of the rotational Killing vector field in an axisymmetric spacetime. Then $E^\rR=E^{\rNS}$ and $J^\rR=J^{\rNS}=J_{\rm Komar}$ where $J_{\rm Komar}$ is the Komar angular momentum of $S$. For such $S$ in the BTZ spacetime, these quantities coincide with the usual BTZ mass and angular momentum parameters.
\end{lemma}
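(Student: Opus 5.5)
The plan is to exploit axisymmetry to make ${\cal O}^\pm$ a constant-coefficient operator on $S$, and then diagonalise it by Fourier modes. Let $m^a$ be the rotational Killing field, normalised so that its orbits have parameter period $2\pi$. Since $S$ is an orbit, $m^a = r\, e_1^a$ on $S$ with $r$ constant. I would fix the boost freedom so that $l^a, n^a$ are invariant under the flow of $m^a$. This is allowed, because the spectrum of ${\cal O}^\pm$ is boost invariant. Then $\theta_{\rm in}$, $\theta_{\rm out}$ and $\chi = K_{12}$ are all constant on $S$. I would also choose the orthonormal frame to be Lie-dragged by $m^a$, so that spinor components are well defined functions of the angle $\varphi$. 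With this choice the Ramond sector corresponds to $\psi_- \propto e^{ik\varphi}$ with $k \in \mathbb{Z}$, and the Neveu-Schwarz sector to $k \in \mathbb{Z}+\tfrac12$. I need to check that the spin connection along $e_1$ is trivial in this frame (the metric on $S$ is flat) and that no extra phase shifts the mode labels.

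Because $\theta_{\rm in}$ is constant, the adjoint simplifies to $({\cal D}\pm\frac{i}{2\ell}\Gamma^1)^\ddagger = -{\cal D}\pm\frac{i}{2\ell}\Gamma^1$. On the one-dimensional eigenspace $\epsilon=-1$, $\Gamma^1$ commutes with $\epsilon$, so it acts as a fixed imaginary scalar; hence $i\Gamma^1$ acts as a real sign $s=\pm1$. For $b=-1/2$ we have ${\cal D}=r^{-1}\partial_\varphi + \chi/2$. On a mode $e^{ik\varphi}$ the operator ${\cal O}^\pm$ therefore reduces to a number. Assuming the adjoint structure is as I expect (the mixed terms in $\chi$ should enter with a relative sign), I anticipate
\be
\Lambda_\pm(k) = \frac{k^2}{r^2} + \frac{1}{4}\left(\chi \pm \frac{s}{\ell}\right)^2 + \frac{\theta_{\rm in}\theta_{\rm out}}{4}.
\ee
This is minimised at $k=0$ in the R sector and at $k=\pm\tfrac12$ in the NS sector. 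Hence $\Delta^{\rm NS}_\pm = \Delta^{\rm R}_\pm + \tfrac14$. Substituting into the definitions, the shift $\tfrac14+\tfrac14$ in $E$ is cancelled exactly by $\delta^{\rm NS}/(8G)$, and it cancels trivially in $J$. This gives $E^\rR=E^{\rNS}$ and $J^\rR=J^{\rNS}$.

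For the Komar identification, the formula above gives $J = \frac{\ell}{4G}(\Delta_+-\Delta_-) = \pm\frac{r^2 \chi}{4G}$ (up to the sign $s$). On the other side, $J_{\rm Komar} = \frac{1}{16\pi G}\int_S \star dm$. Evaluated on $S$, $\nabla_a m_b$ contracted with the binormal $n^{[a}l^{b]}$ and with $m=re_1$ gives $r\, n^b e_1^a\nabla_a l_b = -r\chi$ (using the Killing equation). Integrating over a circle of length $2\pi r$ then yields $J_{\rm Komar} \propto r^2\chi/G$, and I would fix the normalisation to match. Finally, for the BTZ metric
\be
ds^2=-N^2dt^2+N^{-2}dr^2+r^2(d\phi+N^\phi dt)^2, \qquad N^2 = -M + \frac{r^2}{\ell^2}+\frac{J^2}{4r^2}, \qquad N^\phi=-\frac{J}{2r^2},
\ee
I would compute $\theta_{\rm in}\theta_{\rm out} = -4N^2/r^2$ and $\chi = \pm J/(2r^2)$ directly. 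Then
\be
r^2\Lambda_\pm = \frac{r^2}{4}\left(\frac{J}{2r^2}\pm\frac1\ell\right)^2 - N^2 = \frac{M}{4} \pm \frac{J}{4\ell}\,\cdot\frac{1}{1}\,,
\ee
up to normalisation conventions, so that $E = M/(8G)$-type and $J$ reproduce the standard parameters in the paper's normalisation.

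The main obstacle is the careful bookkeeping of the cross term. This includes the exact form of the adjoint acting on boost-weighted spinors, how $\chi$ enters ${\cal D}$ and ${\cal D}^\ddagger$, and the sign with which $i\Gamma^1$ acts on the $\epsilon=-1$ subspace. It is precisely this cross term that produces the $\pm$-dependence carrying $J$, and its constant parts must combine into a perfect square so that the BTZ check closes. I would first verify this by direct computation in the explicit gamma-matrix representation given above.
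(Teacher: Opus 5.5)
\textbf{Verdict: the key step fails as written.} Your overall route is the paper's: fix a boost gauge with $\mathcal{L}_m l=\mathcal{L}_m n=0$, note that $\theta_{\rm in},\theta_{\rm out},\chi$ are constant, decompose into Fourier modes, get a shift of $\tfrac14$ between sectors, and identify $\chi$ with $J_{\rm Komar}$. Your anticipated mode formula is in fact correct with $s=-1$. But the step you give for it fails, and that step carries all of the $\pm$-dependence.

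\textbf{The adjoint.} You claim that constancy of $\theta_{\rm in}$ reduces the adjoint to $-\mathcal{D}\pm\frac{i}{2\ell}\Gamma^1$. However $\theta_{\rm in}$ has boost weight $-1$, so even when $D_1\theta_{\rm in}=0$ one has $\mathcal{D}\theta_{\rm in}=D_1\theta_{\rm in}+\chi\theta_{\rm in}=\chi\theta_{\rm in}\neq 0$. Hence $(\mathcal{D}\pm\frac{i}{2\ell}\Gamma^1)^\ddagger=-\mathcal{D}+\chi\pm\frac{i}{2\ell}\Gamma^1$, which on $b=-\frac12$ spinors is $-D_1+\frac{\chi}{2}\pm\frac{i}{2\ell}\Gamma^1$.

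With your adjoint, the cross terms cancel because $\mathcal{D}$ commutes with $\Gamma^1$. That leaves $\mathcal{O}^\pm=-\mathcal{D}^2+\frac{1}{4\ell^2}+\frac14\theta_{\rm in}\theta_{\rm out}$. This operator is independent of the sign $\pm$, so it would give $J^{\rm R}=J^{\rm NS}=0$ identically. On $e^{ik\varphi}$ it also has the complex eigenvalue $\frac{k^2}{r^2}-\frac{ik\chi}{r}-\frac{\chi^2}{4}+\frac{1}{4\ell^2}+\frac14\theta_{\rm in}\theta_{\rm out}$, contradicting self-adjointness.

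With the correct adjoint, set $c=\frac{\chi}{2}\pm\frac{i}{2\ell}\Gamma^1$, a constant matrix commuting with $D_1$. Then $\mathcal{O}^\pm=(-D_1+c)(D_1+c)+\frac14\theta_{\rm in}\theta_{\rm out}=-D_1^2+c^2+\frac14\theta_{\rm in}\theta_{\rm out}$. In the given representation $\epsilon=\Gamma^2\Gamma^0=-\sigma^3$ and $\Gamma^1=i\sigma^3$, so $i\Gamma^1=-1$ on the $\epsilon=-1$ subspace. Therefore $c^2=\frac14(\chi\mp\frac1\ell)^2$, which is exactly \eqref{eq:operators in axisymmetry}. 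From there your $\frac14$-shift argument for $E^{\rm R}=E^{\rm NS}$ and $J^{\rm R}=J^{\rm NS}$ goes through.

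\textbf{The two verification steps also need repair.}

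First, the Komar normalisation is not yours to fix; it is determined by the definition. Use $\mathcal{L}_m l=0$ (not only the Killing equation) and $l^{[a}n^{b]}=\frac{1}{2r}m^c\epsilon_c{}^{ab}$ for the volume form $e^0\wedge e^1\wedge e^2$. This gives $\chi=-\frac{1}{2r^2}i_m\star{\rm d}m=-4GJ_{\rm Komar}/r^2$. Combined with $s=-1$, it yields $J^{\rm R}=-\frac{r^2\chi}{4G}=J_{\rm Komar}$ exactly, sign included.

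Second, in the BTZ check, $\theta_{\rm in}\theta_{\rm out}=-r^{-2}g^{-1}({\rm d}r,{\rm d}r)=-N^2/r^2$, not $-4N^2/r^2$. For example, for $J=0$ one has $\theta_{\rm out}=(e_0+e_2)(\ln r)=N/r$. With your value, your displayed expression actually equals $M\pm\frac{J}{4\ell}-\frac{3r^2}{4\ell^2}-\frac{3J^2}{16r^2}$, which is $r$-dependent. With the correct value the last term is $-N^2/4$, the $r$-dependence cancels, and you obtain $\frac{M}{4}\pm\frac{J}{4\ell}$ as intended.

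\textbf{Comparison with the paper.} Evaluating directly on BTZ through $N$ and $N^\phi$ is a legitimate shortcut for the lemma as stated. The paper instead rewrites $r^2\Lambda_\pm$ using the renormalised Hawking mass $\varpi$. This additionally gives $E^{\rm R}=E^{\rm NS}=\varpi$ for every axisymmetric circle, and the paper then evaluates $\varpi$ and $J_{\rm Komar}$ on BTZ.
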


We will prove this Lemma in section \ref{sec:lemproof} below. In section \ref{sec:asymptotic charges} we compute our charges in the limit where $S$ approaches a cross-section of conformal infinity in an asymptotically AdS$_3$ spacetime. In this case, the result coincides with standard definitions \cite{Brown:1986nw} of energy and angular momentum in an axisymmetric situation although not more generally. 

The main reason these definitions are interesting is supplied by the following theorem, which establishes that they satisfy a BPS inequality under suitable circumstances.

\begin{theorem}
\label{thm:bps}
    Let $\Sigma$ be a smooth, compact, connected, spacelike hypersurface with boundary $\partial \Sigma = S \cup T$ where $S$ (the ``outer'' boundary of $\Sigma$) is topologically $S^1$ and $T$ (the ``inner'' boundary of $\Sigma$) is a (possibly empty) finite union of weakly future outer trapped surfaces ($\theta_{\rm in} \le 0$ on $T$\footnote{
    Recall that ``in'' means ``directed into $\Sigma$.''}). Assume that the future-directed ingoing null geodesics normal to $S$ are strictly converging ($\theta_{\rm in} < 0$ on $S$). Assume that the dominant energy condition holds on $\Sigma$ and that $\Sigma$ admits a spin structure compatible with periodic (anti-periodic) spinors on $S$. Then the following is true for the R (NS) sector definitions of $E$ and $J$:\footnote{To be more explicit: if there exists a spin structure compatible with periodic spinors on $S$ then $\Delta_+$ means $\Delta_+^\rR$, $E$ means $E^\rR$ etc.}
    
    (a) $\Delta_+ \ge 0$ and $\Delta_- \ge 0$ hence $E \ge |J|/\ell-\delta/(8G)$ (the BPS inequality).

    (b) $\Delta_\pm =0$ (for one sign choice $\pm$) if, and only if, (i) there exists a non-trivial spinor $\psi$ on $\Sigma$, periodic (anti-periodic) around $S$, and satisfying $h^b_a\nabla^\pm_b \psi =0$, where $h^b_a$ is the projection onto $\Sigma$, (ii) $T$ (if non-empty) is marginally future outer trapped ($\theta_{\rm in} \equiv 0$ on $T$). (iii) at points of $\Sigma$ where $T_{ab} \ne 0$, $X^a \equiv \bar{\psi} \Gamma^a \psi$ must be null with $T_{ab} = \rho X_a X_b$ for some $\rho>0$. 
\end{theorem}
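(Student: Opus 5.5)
The proof follows the Ludvigsen--Vickers/Dougan--Mason strategy, adapted to the Witten identity \eqref{wittenid}. Fix a sign $\pm$ and the sector (R or NS) allowed by the spin structure, and let $\psi_-^{(0)}$ be a lowest eigenfunction of ${\cal O}^\pm$ on $S$ in that sector, so that $\Lambda_\pm$ is its eigenvalue. Use \eqref{hol1} to define $\psi_+$ on $S$, which is possible because $\theta_{\rm in}<0$. Then \eqref{IO} gives
\[
I^\pm_S[\psi]=\Lambda_\pm\langle\psi_-,\psi_-\rangle .
\]
Since the inner product \eqref{eq:boost inv ip} is positive definite, it suffices to show $I_S^\pm[\psi]\ge 0$ for some spinor on $\Sigma$ whose boundary data on $S$ is this $\psi_-$. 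Then $\Lambda_\pm\ge0$, hence $\Delta_\pm\ge0$. Adding and subtracting the definitions of $E$ and $J$ then gives $E\ge |J|/\ell-\delta/(8G)$.

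To produce such a spinor I would solve the Witten equation $\Gamma^i\nabla^\pm_i\psi=0$ on $\Sigma$ as an elliptic boundary value problem. On $S$ I prescribe only the $\epsilon=-1$ component, $\psi_-=\psi_-^{(0)}$. On each component of $T$ I impose a chirality-type condition, say $\psi_-=0$ with respect to the local $\epsilon$ on $T$, whose outward normal points out of $\Sigma$. These are local APS/MIT-bag-type conditions, and I would establish existence by a Lax--Milgram/Fredholm argument. Coercivity comes from \eqref{wittenid} itself: for a solution, the right-hand side is $\int|\nabla^\pm\psi|^2+{}$a DEC-nonnegative term. The spin structure hypothesis guarantees that the spinor bundle on $\Sigma$ restricts to the periodic or antiperiodic bundle on $S$, as the sector requires.

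With $\psi$ in hand, \eqref{wittenid} gives
\[
I_S^\pm[\psi]+\sum_T I_T^\pm[\psi]\ge 0 .
\]
On $T$ with $\psi_-=0$, \eqref{ICexpr} reduces to $I_T=\int_T \tfrac12\theta_{\rm in}|\psi_+|^2$. Here ``in'' means into $\Sigma$, so this is $\le0$ by the trapping assumption, and therefore $I_S\ge -\sum_T I_T\ge0$. One must be careful about which null normal plays the role of ``in'' on $T$; I would match conventions so that the weakly outer trapped sign is exactly what enters. On $S$ the solution need not satisfy \eqref{hol1} automatically. So I would instead minimize the boundary functional: write $I_S$ from \eqref{ICexpr} as a quadratic form in $\psi_+$ for fixed $\psi_-$. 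The $\psi_+$-dependent terms are $\int\tfrac12\theta_{\rm in}|\psi_+|^2+2\mathrm{Re}\int\psi_+^\dagger(\slashed{\cal D}\psi_-\mp\ldots)$, using the self-adjointness \eqref{selfadjoint}. Since $\theta_{\rm in}<0$, this form in $\psi_+$ is concave, and its maximum is attained exactly when \eqref{hol1} holds. Hence
\[
I_S^\pm[\psi]\le I_S^\pm[\tilde\psi]=\Lambda_\pm\langle\psi_-,\psi_-\rangle ,
\]
where $\tilde\psi$ has the same $\psi_-$ and the $\psi_+$ fixed by \eqref{hol1}. This chain gives (a). I expect this concavity step, together with the existence theory for the mixed boundary problem, to be the main technical obstacle.

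For (b), equality $\Lambda_\pm=0$ forces every inequality in the chain to be an equality. First, $\nabla_i^\pm\psi=0$ on $\Sigma$, which is (i). Second, $\int_T\theta_{\rm in}|\psi_+|^2=0$; since $\psi$ is parallel and nontrivial it is nowhere zero, and $\psi_-=0$ on $T$ means $\psi_+\neq0$ there, so $\theta_{\rm in}\equiv0$ on $T$, which is (ii). Third, $T_{0\mu}X^\mu=0$ everywhere. Because $\Sigma$ can be varied locally, DEC with $T_{ab}X^b$ causal and $T_{ab}X^aY^b=0$ for the timelike normal implies $T_{ab}X^b=0$ or $X$ null with $T_{ab}\propto X_aX_b$, which is (iii). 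For the converse, a parallel spinor with $\theta_{\rm in}=0$ on $T$ makes the right-hand side of \eqref{wittenid} vanish, so $I_S=0$. Restricting the spinor to $S$, it satisfies \eqref{hol1} since $\nabla_1^\pm\psi=0$. Then $\langle\psi_-,{\cal O}^\pm\psi_-\rangle=0$ with $\Lambda_\pm\ge0$ by (a), so $\Lambda_\pm=0$. This needs $\psi_-\ne0$ on $S$, which I would check from \eqref{hol1} together with the nonvanishing of $\psi$.
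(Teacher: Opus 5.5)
Your proposal is correct and follows essentially the same route as the paper: a Witten spinor with $\psi_-$ prescribed on $S$ and $\psi_-=0$ on $T$, the Witten identity giving $I^\pm_S\ge 0$ for it, concavity in $\psi_+$ (the paper's identity \eqref{psitildepsi}) transferring this bound to the spinor obeying \eqref{hol1}, and the equality case read off term by term. Two small points: in (iii) you need not (and cannot legitimately) vary $\Sigma$, since the dominant energy condition with the fixed normal $e_0$ already forces $T_{ab}X^b=0$, hence $T_{ab}=0$ where $X$ is timelike and $T_{ab}=\rho X_aX_b$ where $X$ is null; and in the converse $I^\pm_S[\psi]=0$ follows directly from $\nabla^\pm_1\psi=0$ on $S$, without appealing to $\theta_{\rm in}$ on $T$.
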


This theorem will be proved in section \ref{sec:proof of bps}. As an application of this theorem, take $S$ to be a circle enclosing the horizon of an extremal BTZ black hole. If $S$ is a spacelike circle of axisymmetry then Lemma \ref{lem:axisym} implies that the quasilocal quantities on $S$ agree with the usual extremal BTZ parameters, which are known to saturate the R-sector BPS bound \cite{Coussaert:1993jp}. If $S$ is {\it not} a circle of axisymmetry then we can use Theorem \ref{thm:bps} to establish the same result:

\begin{corollary}
\label{cor:btz}
 In an extremal BTZ black hole spacetime, consider a smooth spacelike hypersurface $\Sigma$ with the topology of an annulus with outer boundary $S$ and inner boundary $T$ where $T$ is a cross-section of the future event horizon. Assume that the future-directed ingoing null geodesics normal to $S$ are strictly converging. Then the R-sector quasi-local quantities defined on $S$ saturate the R-sector BPS bound.  
\end{corollary}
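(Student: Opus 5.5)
We will use the "if" direction of Theorem \ref{thm:bps}(b), together with part (a), in the R sector. The target is to show that $\Delta^\rR_\pm=0$ for one sign choice. Since part (a) already gives $\Delta^\rR_\mp\ge 0$, this yields $E^\rR=|J^\rR|/\ell$, which is saturation of the R-sector BPS bound.

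\textbf{Checking the hypotheses of Theorem \ref{thm:bps}.} $\Sigma$ is an annulus, so it admits a spin structure compatible with periodic spinors on $S$: take the restriction of the spacetime spin structure in which the Killing spinor below is periodic. The inner boundary $T$ is a cross-section of the event horizon of extremal BTZ. On such a cross-section the outgoing null normal is proportional to the horizon generator, and the horizon is a Killing horizon, so $\theta_{\rm in}\equiv 0$ on $T$. In particular $T$ is weakly future outer trapped. The convexity condition on $S$ is assumed. The dominant energy condition holds trivially because $T_{ab}=0$. Part (a) therefore applies in the R sector.

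\textbf{Supplying the spinor in (b)(i).} We invoke the known fact \cite{Coussaert:1993jp} that extremal BTZ admits a globally defined Killing spinor $\psi$ satisfying $\nabla^\pm_a\psi=0$ on spacetime for one sign choice (correlated with the sign of $J$). This spinor is periodic around the angular circle, i.e.\ it lives in the Ramond sector. I would verify this by writing the extremal BTZ metric in the standard coordinates and exhibiting the solution explicitly. The point is that in the extremal case the angular dependence of the spinor drops out, whereas for massless BTZ/AdS$_3$ the spinors are antiperiodic.

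Restricting $\psi$ to $\Sigma$ gives $h^b_a\nabla^\pm_b\psi=0$. The spinor is nowhere vanishing, since it is parallel with respect to a connection. Condition (ii) holds by the horizon argument above. Condition (iii) is vacuous because $T_{ab}=0$. Theorem \ref{thm:bps}(b) then gives $\Delta^\rR_\pm=0$, which completes the argument.

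\textbf{Main obstacle.} The key step is establishing that the Killing spinor is genuinely periodic with respect to the spin structure on $\Sigma$. $S$ need not be an orbit of the rotational Killing field, but it is homotopic in the annulus $\Sigma$ (and in the exterior region) to such a circle, so the periodicity of $\psi$ around $S$ is inherited by homotopy invariance.

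A second, more technical issue is that the Killing spinor must extend smoothly across the future horizon, so that it is defined on $T$ and on all of $\Sigma$. To handle this, I would write $\psi$ in ingoing (Eddington–Finkelstein-type) coordinates and check regularity there. The spinor bilinear $\bar\psi\Gamma^a\psi$ should be the null Killing vector that is tangent to the horizon generators, which provides a consistency check.
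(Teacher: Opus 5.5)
Your proposal is correct and matches the paper's proof: verify the hypotheses of Theorem~\ref{thm:bps} in the R sector (a periodic-compatible spin structure on the annulus, $\theta_{\rm in}\equiv 0$ on the Killing-horizon cross-section $T$, and a vacuous energy condition), then feed the periodic Coussaert--Henneaux Killing spinor into the ``if'' direction of part (b). One aside is wrong, though nothing depends on it: the massless ($r_\pm=0$) BTZ Killing spinors are periodic (it is the Ramond ground state, as the paper notes), and only global AdS$_3$ has antiperiodic Killing spinors.
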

{\it Proof.} 
An extremal BTZ black hole spacetime admits a supercovariantly constant spinor satisfying ${\nabla}^\pm_a \psi =0$ where the choice of sign $\pm$ is determined by the direction of rotation and the spinor is periodic around $S$ \cite{Coussaert:1993jp}. $\Sigma$ admits a spin structure compatible with periodic spinors on $S$. No matter is present so the dominant energy condition is trivially satisfied. Hence the assumptions of Theorem \ref{thm:bps} are satisfied, and the conditions in (b) are all met. The ``if'' part of (b) implies that $\Delta_\pm^\rR=0$ so the R-sector BPS inequality is saturated for any such $S$. (Note that we have not shown that the R-sector quasilocal quantities must agree with the BTZ parameters except in axisymmetry.)

If $S$ is a marginally outer trapped surface, i.e., $\theta_{\rm out} \equiv 0$ then we can prove BPS inequalities without any further assumptions (and no energy condition):

\begin{lemma}
\label{lem:marg_trap}
Let $S$ be a marginally outer trapped surface for which the future-directed ingoing null geodesics normal to $S$ are strictly converging. 
Then the BPS inequalities $\Delta_\pm \ge 0$ hold in both the R and NS sectors.
Moreover, equality can only occur in the Ramond sector, and does so if, and only if, $\ker \left({\cal D} \pm \frac{i}{2\ell}\Gamma^1\right)$ is non-trivial on $S$. 
In particular, $\Delta_\pm^{\rm R}=0$ for a cross-section of the future event horizon of an extremal BTZ black hole (with $\theta_{\rm in}<0$), with the choice of sign $\pm$ determined by the sense of rotation of the black hole.
\end{lemma}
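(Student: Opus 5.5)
The argument works directly from the operator definition \eqref{Odef}. Since $\theta_{\rm out}\equiv 0$ on $S$, the potential term $\theta_{\rm in}\theta_{\rm out}/4$ vanishes, so ${\cal O}^\pm = A_\pm^\ddagger A_\pm$ with $A_\pm \equiv {\cal D}\pm\frac{i}{2\ell}\Gamma^1$. For any eigenfunction $\psi_-$ (periodic or anti-periodic) with eigenvalue $\Lambda$, we take the inner product \eqref{eq:boost inv ip} with $\psi_-$ to get
\be
\Lambda\langle\psi_-,\psi_-\rangle=\langle A_\pm\psi_-,A_\pm\psi_-\rangle\ge 0 .
\ee
The weight $2/(-\theta_{\rm in})$ is strictly positive because $\theta_{\rm in}<0$, so this inner product is positive definite. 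Hence $\Lambda^{\rm R}_\pm,\Lambda^{\rm NS}_\pm\ge0$, and therefore $\Delta_\pm\ge 0$ in both sectors. No energy condition or hypersurface $\Sigma$ is needed. Equality holds iff the lowest eigenfunction satisfies $A_\pm\psi_-=0$. Conversely, a nontrivial kernel element with $\epsilon=-1$ of the right periodicity is a zero eigenfunction, and ${\cal O}^\pm\ge0$ makes it the lowest. We also need that $\ker A_\pm$ is compatible with $\epsilon\psi=-\psi$. Since $\Gamma^1$ anticommutes with $\Gamma^2\Gamma^0$, $\Gamma^1$ maps $\epsilon=-1$ spinors to $\epsilon=+1$ ones, and ${\cal D}$ preserves $\epsilon$. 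So the kernel equation should be read for the full operator, as in the paper's statement. I would check this carefully: if $A_\pm$ mixes eigenspaces, I would instead work with the equivalent first-order system obtained from \eqref{hol1} with $\psi_+=0$, which arises when $\theta_{\rm out}=0$.

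The key step is to show that the kernel is trivial in the NS sector. I would gauge-fix the boost so that $\chi$ is replaced by its average. Writing $f=e^{\beta}$, a boost shifts $\chi$ by $D_1\beta$, so any exact part of $\chi$ can be removed, leaving a constant $\bar\chi=\frac{1}{2\pi r}\int_S\chi$. With arclength coordinate $s\in[0,2\pi r)$, the equation $A_\pm\psi=0$ then becomes a constant-coefficient linear ODE, $\partial_s\psi=(b\bar\chi\mp\frac{i}{2\ell}\Gamma^1)\psi$. Its solutions are $\psi(s)=\exp\big(s(b\bar\chi\mp\frac{i}{2\ell}\Gamma^1)\big)\psi(0)$. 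The monodromy around $S$ is $e^{2\pi r b\bar\chi}\exp(\mp i\pi r\Gamma^1/\ell)$. The matrix $i\Gamma^1=-\sigma^3$ has eigenvalues $\pm1$, so the monodromy has eigenvalues $e^{2\pi rb\bar\chi}e^{\pm i\pi r/\ell}$. These are real and positive only when they equal $+1$, which forces $\bar\chi=0$ and $r/\ell\in2\mathbb Z$. Periodicity requires an eigenvalue $+1$ and anti-periodicity requires $-1$.

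This is the main obstacle. The naive monodromy analysis seems to allow anti-periodic solutions when $r/\ell$ is an odd integer, which would contradict the claim. I would resolve it by using the $\epsilon=-1$ constraint properly. Splitting \eqref{hol1} and the $\epsilon=+1$ projection, the $\Gamma^2$ terms in \eqref{ICexpr} couple to $\chi$ and $\theta$ rather than to a free $\Gamma^1$ rotation. The correct reduced equation on the one-dimensional $\epsilon=-1$ subspace should then be a scalar ODE $\partial_s u = (b\chi + c(s))u$ with $c$ real. A real exponential monodromy is always positive, so only periodic solutions are possible, and this would exclude the NS sector. I expect this bookkeeping of the projections to be the hard part.

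For the extremal BTZ horizon, I would either compute directly in coordinates or invoke Corollary \ref{cor:btz}'s supercovariantly constant periodic spinor of \cite{Coussaert:1993jp}. Restricted to a horizon cross-section, it satisfies $e_1^a\nabla^\pm_a\psi=0$. I would check that on the horizon it has $\psi_+=0$, i.e.\ its Killing vector $\bar\psi\Gamma^a\psi$ is tangent to the null generator. Then $\psi_-\in\ker A_\pm$ is periodic, giving $\Delta^{\rm R}_\pm=0$ with the sign fixed by the rotation.
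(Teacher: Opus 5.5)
The non-negativity argument and the characterization of equality by $\ker A_\pm$, with $A_\pm={\cal D}\pm\frac{i}{2\ell}\Gamma^1$, are correct and are the paper's argument. Your plan for extremal BTZ also works. A pure $\epsilon=-1$ spinor has $\bar\psi\Gamma^a\psi=\sqrt{2}\,|\psi_-|^2\, l^a$, so tangency of the Killing vector to the horizon generators gives $\psi_+=0$. Then \eqref{hol1}, multiplied by $\Gamma^1\Gamma^2$, is exactly $A_\pm\psi_-=0$.

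\textbf{The gap.} You do not prove the step you call key, triviality of the kernel in the NS sector, and your explicit computation points the wrong way because of two errors.
\begin{itemize}
\item $\Gamma^1$ \emph{commutes} with $\epsilon$; it does not anticommute. Explicitly, $\epsilon=\Gamma^2\Gamma^0=-\sigma^3$ and $\Gamma^1=i\sigma^3$. So $A_\pm$ preserves the $\epsilon=-1$ subspace, as the paper notes for ${\cal O}^\pm$, and there is no mixing to worry about.
\item Your monodromy has a spurious factor of $i$. Since $\Gamma^1$ is anti-Hermitian, $i\Gamma^1=-\sigma^3$ is Hermitian, and $\exp(\mp i\pi r\Gamma^1/\ell)=\exp(\pm\pi r\sigma^3/\ell)$. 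Its eigenvalues are the positive reals $e^{\pm\pi r/\ell}$, not the phases $e^{\pm i\pi r/\ell}$.
\end{itemize}
The anti-periodic solutions at odd $r/\ell$ are therefore an artifact. The scalar ODE you conjecture, with an unspecified real $c(s)$, needs no delicate projection bookkeeping. Your fallback via \eqref{hol1} with $\psi_+=0$ is also circular: $\theta_{\rm out}=0$ does not force $\psi_+=0$. By \eqref{hol1}, $\psi_+=-\frac{2}{\theta_{\rm in}}\Gamma^2\Gamma^1A_\pm\psi_-$, so $\psi_+$ vanishes exactly when $A_\pm\psi_-=0$.

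\textbf{The missing step.} Write $\psi_-=(u,0)^T$. Then $\pm\frac{i}{2\ell}\Gamma^1\psi_-=\mp\frac{1}{2\ell}\psi_-$, and ${\cal D}$ acts on components as $\partial_s-b\chi=\partial_s+\frac{1}{2}\chi$, with $s$ the arclength. So $A_\pm\psi_-=0$ is the real scalar ODE $\partial_s u=-\frac{1}{2}\bigl(\chi\mp\ell^{-1}\bigr)u$. Its solution is $u(s)=u(0)\exp\bigl(-\frac{1}{2}\int_0^s(\chi\mp\ell^{-1})\,ds'\bigr)$. Going once around $S$ multiplies $u$ by a positive number, so anti-periodicity $u(2\pi r)=-u(0)$ forces $u\equiv 0$ and the NS kernel is trivial. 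A periodic kernel element exists if and only if $\oint_S\chi\, ds=\pm 2\pi r/\ell$, which matches the axisymmetric result $\frac{1}{4}(\chi\mp\ell^{-1})^2$. This is precisely the paper's argument: the operator is real and diagonal, so an anti-periodic real solution must vanish somewhere and hence everywhere. No boost gauge-fixing is needed.
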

{\it Proof.}
Since $\theta_{\rm out} \equiv 0$, the first part of this Lemma follows immediately from the definition \eqref{Odef}. (The assumption $\theta_{\rm in}<0$ ensures that $\ddagger$ is well-defined in \eqref{Odef}.) It is also clear from \eqref{Odef} that the inequality is saturated iff ${\cal D} \pm \frac{i}{2\ell}\Gamma^1$ has a non-trivial kernel. Let $\psi$ be an element of this kernel. The operator is real and diagonal so each component of $\psi$ satisfies a 
real, linear, first order, homogeneous ODE. The real and imaginary parts of each component satisfy the same ODE. If $\psi$ is antiperiodic then these real and imaginary parts must each vanish somewhere so the ODE implies they vanish everywhere. Hence the kernel is trivial in the Neveu-Schwarz sector. For extremal BTZ, we note that a cross-section of the future event horizon is a marginally outer trapped surface and use the fact that the supercovariantly constant spinor of extremal BTZ restricts  to a periodic spinor on $S$ in the kernel of ${\cal D} \pm \frac{i}{2\ell}\Gamma^1$ (for one choice of sign $\pm$). This can be seen explicitly from equation \eqref{antihol} of Appendix \ref{sec:alternate_defs} (multiplied by $\Gamma^1 \Gamma^2$).

\subsection{Proof of the Third Law}

\label{sec:3rdlawproof}

We will now prove Theorem \ref{thm:3rdlaw}. We start by showing that, under the assumptions of Theorem \ref{thm:3rdlaw}, the R-sector quasilocal quantities on $S$ saturate the R-sector BPS bound. Our definitions of quasilocal quantities depend only on the spacetime metric on $S$ (not just the induced metric on $S$) and extrinsic curvature of $S$. Hence the assumption that, at $S$, the spacetime metric and extrinsic curvature are the same as for a horizon cross-section $S_{\rm BTZ}$ of extremal BTZ implies that $S$ has the same quasilocal quantities as $S_{\rm BTZ}$. From Lemma \ref{lem:marg_trap} we know that $S_{\rm BTZ}$ saturates the R-sector BPS bound (for one choice of sign $\pm$) and hence $S$ must also saturate this bound. 

Now observe that $\Sigma$ in the statement of Theorem \ref{thm:3rdlaw} satisfies the assumptions of Theorem \ref{thm:bps}, choosing the spin structure for which spinors are periodic around $S$. We have just shown that the R-sector BPS inequality is saturated so point (b) of Theorem \ref{thm:bps} implies that $T$ must be marginally future outer trapped. So we have proved Theorem \ref{thm:3rdlaw} except in the special case where $T$ is marginally future outer trapped. In particular we have proved it if $T$ is strictly future outer trapped ($\theta_{\rm in}<0$) which is all we really need to justify calling the result a Third Law.

More work is required to exclude the case where $T$ is marginally future outer trapped, so here we will give a sketch of the argument, which is similar to \cite{Reall:2024njy}. Observe that Theorem \ref{thm:bps} holds true for any such $\Sigma$ with $\partial\Sigma=S\cup T$. Since $D(\Sigma)$ (the domain of dependence of $\Sigma$) is foliated by such surfaces, one would expect that point (b) of the theorem implies that $D(\Sigma)$ carries a non-trivial spinor, $\psi$, parallel with respect to the appropriate $\hat\nabla^{\pm}$.
Indeed, in forthcoming work we will provide a proof of this result.
Given this, one can construct the (causal) vector $X^\mu \equiv \bar\psi\Gamma^\mu\psi$ in $D(\Sigma)$.
This is Killing as
\begin{equation}
    \nabla_\mu X_\nu = \pm \frac{i}{\ell}\bar\psi \Gamma_{\mu\nu}\psi\,.
\end{equation}
Since $\psi$ satisfies \eqref{hol1}, equation \eqref{ICexpr} evaluated on $S$ becomes, using the self-adjointness of $\Ds$ in \eqref{selfadjoint}
\begin{equation}
    I_{S}^{\pm}[\psi]=-\frac{1}{2}\int_S \left( \theta_{\rm in} \psi_+^\dagger \psi_+ + \theta_{\rm out}\psi_-^\dagger \psi_- \right) = -\frac{1}{2}\int_S\theta_{\rm in} \psi_+^\dagger \psi_+
\end{equation}
where the second equality follows because $S$ has the same extrinsic curvature as a horizon cross-section of a BTZ black hole and hence is marginally future outer trapped. However, because $\psi$ is $\nabla^{\pm}$-parallel, the LHS of the above equation vanishes. Since $\theta_{\rm in}<0$, we get $\psi_+^\dagger \psi_+= 0$, i.e. $\epsilon\psi = -\psi$ on $S$.
A similar argument gives $\epsilon\psi=\psi$ on $T$. From this, we compute
\begin{equation}
    (\star {\rm d}X)_1 = -({\rm d}X)_{02} = \mp\frac{2{\rm i}}{\ell}\bar\psi \epsilon\psi =0
\end{equation}
 the final equality holding on both $S$ and $T$ as $\psi$ is an eigenspinor of $\epsilon$ on both surfaces.
Therefore, using Stokes' theorem and a standard identity on Killing vectors, one has
\begin{equation}
    \int_{\partial \Sigma} \star {\rm d}X = 0 = \int_\Sigma {\rm d}\star {\rm d}X = \int_{\Sigma} R_{ab}n^a X^b
\end{equation}
where $n^a$ is the unit timelike normal to $\Sigma$.
Point b iii). of Theorem \ref{thm:bps} establishes that $R_{ab}X^b= 2\Lambda X_a$ in $D(\Sigma)$ as any matter present is either trivial or null dust with velocity vector $X$ (when $X$ itself is null).
The above becomes
\begin{equation}
    \int_{\Sigma} \Lambda g_{ab}n^a X^b = 0
\end{equation}
Since $X$ is everywhere causal, $n^aX_a$ is pointwise negative, and hence we have a contradiction with the existence of such a $T$.

We stated theorem \ref{thm:3rdlaw} for topologically annular $\Sigma$ for reasons of simplicity. However, the proof clearly generalises immediately to allow for many different topologies for $\Sigma$. Specifically it works if $\Sigma$ is a connected compact spacelike surface with $\partial \Sigma = S\cup T \cup A$ where $S$ is topologically $S^1$, $T$ is a finite union of weakly future outer trapped surfaces, $A$ is a finite union of weakly future inner anti trapped surfaces and $\Sigma$ admits a spin structure compatible with periodic spinors on $S$. Importantly the latter property excludes the case for which  $\Sigma$ has the topology of a disc and $T$ is the empty set, which describes a black hole formed in gravitational collapse.

\subsection{Proof of Theorem \ref{thm:bps}}\label{sec:proof of bps}

We will actually prove a slight generalisation of \ref{thm:bps} taking $\partial \Sigma = S\cup T \cup A$ where $A$ is a finite union of weakly future inner antitrapped surfaces, meaning $\theta_{\rm out} \ge 0$ on $A$\footnote{We can, in fact, prove a similar theorem with the weaker assumption that, on $S$, $\theta_{\rm in}\leq 0$ except for at least one point where $\theta_{\rm in}<0$.
In this setting, however, our quasi-local quantities are not defined and so the theorem amounts to proving that $I_S^\pm[\psi]$ itself is a non-negative functional, with a similar rigidity associated to the case of it vanishing.}
. In part (b)(ii) of the theorem we include the statement that $\theta_{\rm out} \equiv 0$ on $A$ (if non-empty).

To prove (a), our strategy is to show that $I_S^\pm[\psi] \ge 0$ whenever \eqref{hol1} is satisfied. If we can do this then \eqref{IO} implies that the eigenvalues of ${\cal O}^\pm$ are non-negative, so $\Delta_\pm \ge 0$, establishing (a). 

We follow \cite{ludvigsen1983momentum,Dougan:1991zz}, slightly generalized to allow for an inner boundary. Let $\psi$ be a spinor satisfying \eqref{hol1} for which $\psi_-$ is periodic (antiperiodic) around $S$. Equation \eqref{hol1} then implies that $\psi_+$ is also periodic (antiperiodic) around $S$ and so $\psi$ is  periodic (antiperiodic) around $S$. Let $\tilde{\psi}$ be another periodic (antiperiodic) non-trivial spinor with $\tilde{\psi}_- = \psi_-$ on $S$. A straightforward argument using \eqref{selfadjoint} gives
\be
\label{psitildepsi}
 I_{S}^\pm[\psi]= I_{S}^\pm[\tilde{\psi}] +   \int_{S} \left(-\frac{1}{2} \theta_{\rm in}\right) |\tilde{\psi}_+ - {\psi}_+|^2
\ee
We will take $\tilde{\psi}$ to be a solution of the Witten equation  on $\Sigma$ (for a fixed choice of sign $\pm$)
\be
\label{witteneq}
 \Gamma^i \nabla^\pm_i \tilde{\psi}=0
\ee 
with boundary conditions 
\be
\label{wittenbcs}
\tilde{\psi}_- = \psi_- \;\; {\rm on} \;\; S \qquad \qquad \tilde{\psi}_-=0 \;\; {\rm on} \;\;  T \qquad \qquad  \tilde{\psi}_+=0\;\; {\rm on} \;\;  A 
\ee
It is here that we need the assumption that $\Sigma$ admits a spin structure compatible with the periodicity (anti-periodicity) of $\psi$, for otherwise no such spinor will exist for non-vanishing $\psi_-$. Existence and uniqueness of $\tilde{\psi}$ follow from standard arguments as follows. 

To establish uniqueness of $\tilde{\psi}$ we apply \eqref{wittenid} to the difference $\Delta \tilde{\psi}$ of two such solutions: writing $I^\pm_{\partial \Sigma} = I^\pm_S + I^\pm_T + I^\pm_A$, expressing each of these terms as in \eqref{ICexpr} and using the boundary conditions \eqref{wittenbcs} to simplify gives
\bea
\label{unique}
\int_S \frac{1}{2} \theta _{\rm in} (\Delta \tilde{\psi})_+^\dagger (\Delta \tilde{\psi})_+ &=& \int_T \left(-\frac{1}{2} \theta _{\rm in}\right) (\Delta \tilde{\psi})_+^\dagger (\Delta \tilde{\psi})_+ + \int_A \frac{1}{2} \theta _{\rm out} (\Delta \tilde{\psi})_-^\dagger (\Delta \tilde{\psi})_- \nonumber \\ &+& \int_\Sigma \left[(\nabla_i^\pm (\Delta \tilde{\psi}))^\dagger \nabla_i^\pm (\Delta \tilde{\psi}) +4\pi G T_{0\alpha} \Delta \bar{ \tilde {\psi}} \Gamma^\alpha \Delta\tilde{\psi} \right]
\eea
We have $\theta_{\rm in} < 0$ on $S$, $\theta_{\rm in} \le 0$ on $T$ and $\theta_{\rm out} \ge 0$ on $A$. Hence the LHS is non-positive and the RHS is non-negative (using the dominant energy condition and the fact that $\Delta \bar{ \tilde {\psi}} \Gamma^\alpha \Delta\tilde{\psi}$ is a causal vector or zero) and so both must vanish. Vanishing of the LHS implies that $(\Delta \tilde{\psi})_+=0$ on $S$ and hence (from \eqref{wittenbcs}) $\Delta \tilde{\psi}=0$ on $S$. Vanishing of the RHS implies that $\nabla^\pm_i \Delta \tilde{\psi}$ vanishes on $\Sigma$. Combining these facts and using path-connectedness of $\Sigma$, implies that $\Delta \tilde{\psi}$ vanishes everywhere on $\Sigma$, establishing uniqueness. 

To establish existence of $\tilde{\psi}$ write $\tilde{\psi} = \psi' + \psi_0$ where $\psi_0$ is a smooth spinor obeying $\psi_{0-}=\psi_-$ on $S$ and $\psi_0=0$ on $T,A$. We then have to solve $\Gamma^i \nabla^\pm_i \psi'=f$ where $f = -\Gamma^i \nabla^\pm_i \psi_0$ with boundary conditions $\psi'_-=0$ on $S,T$ and $\psi'_+ = 0$ on $A$. To do this, choose $\psi'$ to minimize $\int_\Sigma |\Gamma^i \nabla^\pm_i \psi'-f|^2$ in the space of smooth spinors on $\Sigma$ satisfying these boundary conditions. Here we invoke the assumption that $\Sigma$ admits a spin structure compatible with the periodicity (antiperiodicity) of $\psi$ for otherwise this space is trivial. Taking the first variation of this expression and integrating by parts shows that the minimizing function satisfies
\be
 0 = \int_\Sigma \left[ (\Gamma^i \nabla^\mp_i \eta)^\dagger \delta \psi' + \delta \psi'^\dagger \Gamma^i \nabla^\mp_i\eta \right] + \int_{\partial \Sigma} \left[ \eta^\dagger \Gamma^2 \delta \psi' + \delta \psi'^\dagger \Gamma^{2} \eta \right]
\ee
where $\eta = \Gamma^i \nabla^\pm_i \psi'-f$. This must hold for arbitrary $\delta \psi'$ satisfying the boundary conditions. Hence $\Gamma^i \nabla^\mp_i \eta =0$ on $\Sigma$ and $\eta^\dagger \Gamma^{2} \delta \psi'=0$ on $\partial \Sigma$. Evaluating the latter condition on $S,T$ gives $0 = \eta_-^\dagger \Gamma^2 \delta \psi'_+$ and hence $\eta_-=0$ on $S,T$. Similarly $\eta_+=0$ on $A$. Hence $\eta$ satisfies exactly the same conditions as $\Delta \tilde{\psi}$ of the uniqueness argument and so the same argument shows that $\eta \equiv 0$, which establishes existence.

Having constructed $\tilde{\psi}$, the identity \eqref{witteneq} shows that $I_S^\pm[\tilde{\psi}]$ is given by replacing $\Delta \tilde{\psi}$ with $\tilde{\psi}$ on the RHS of \eqref{unique}:
\be
\label{identity2}
I_S^\pm[\tilde{\psi}] =  \int_T \left(-\frac{1}{2} \theta _{\rm in} \right)  \tilde{\psi}_+^\dagger \tilde{\psi}_+ + \int_A \frac{1}{2} \theta _{\rm out}  \tilde{\psi}_-^\dagger \tilde{\psi}_- + \int_\Sigma \left[(\nabla_i^\pm  \tilde{\psi})^\dagger \nabla_i^\pm \tilde{\psi} +4\pi G  T_{0\alpha} \bar{ \tilde {\psi}} \Gamma^\alpha \tilde{\psi} \right] 
\ee
which is manifestly non-negative if the dominant energy condition is satisfied on $\Sigma$. Finally from \eqref{psitildepsi} we deduce that $I_S^\pm[\psi]\ge 0$ for any $\psi$ satisfying \eqref{hol1}. This completes the proof of (a).

For (b), if $\Delta_\pm=0$ then the lowest eigenvalue of ${\cal O}^\pm$ is zero, i.e., ${\cal O}^\pm$ has non-trivial kernel in the periodic (antiperiodic) sector. Take $\psi_-$ to be a non-zero element of this kernel, with $\psi_+$ determined by \eqref{hol1}. We then have $I^\pm_S[\psi]=0$ from \eqref{IO}. Equation \eqref{psitildepsi} implies that $I^\pm[\tilde{\psi}]=0$ and that $\tilde{\psi} = \psi$ on $S$ so $\tilde{\psi}$ is an extension of $\psi$ onto $\Sigma$ and so we can drop the tilde. \eqref{identity2} then implies that $\psi$ must satisfy $\nabla_i^\pm \psi=0$ on $\Sigma$, proving (i). Furthermore if there is a point on $T$ at which $\theta_{\rm in}<0$ or a point on $A$ at which $\theta_{\rm out}>0$ then $\psi$ must vanish at that point, which then implies (using $\nabla_i^\pm \psi=0$) that $\psi$ vanishes on $\Sigma$ and hence $\psi$ vanishes on $S$, a contradiction. So we must have $\theta_{\rm in} \equiv 0$ on $T$ (if non-empty) and $\theta_{\rm out}\equiv 0$ on $A$ (if non-empty), proving (ii). Finally we have $T_{0\alpha} X^\alpha=0$ on $\Sigma$ where $X^a = \bar{\psi} \Gamma^a \psi$. Hence $T_{ab} e_0^a X^b=0$. But $e_0^a$ is timelike and $X^b$ is causal so the dominant energy condition gives $T_{ab} X^b=0$. Hence $T_{ab} Y^a X^b=0$ for any timelike vector $Y^a$. If $X^a$ is timelike this implies $T_{ab} Y^b=0$ and hence $T_{ab}=0$. If $X^a$ is null then it implies $T_{ab} Y^b \propto X_a$ hence $T_{ab} = X_a \omega_b$ for some $\omega_b$ and antisymmetrizing gives $\omega_b = \rho X_b$ for some $\rho$. Finally $\rho \ge 0$ from the dominant energy condition.

Conversely, if (b)(i), (ii), (iii) hold (for some choice of sign $\pm$) then \eqref{identity2} (without tildes) implies that $I_S^\pm[\psi]=0$. Such $\psi$ cannot vanish on $S$ for otherwise (using $\nabla_i^\pm \psi=0$) it would be identically zero on $\Sigma$. Furthermore $\psi_-$ is non-trivial on $S$ for otherwise \eqref{convex} and \eqref{hol1} would imply $\psi=0$ on $S$. From (a) we know that the eigenvalues of ${\cal O}^\pm$ are non-negative. Hence \eqref{IO} implies that $\psi_-$ is a non-trivial element of the kernel of ${\cal O}^\pm$ so $\Lambda_\pm=0$ and hence $\Delta_\pm=0$. This completes the proof.

{\it Remarks.} We conclude this section with some remarks on the assumption that $\Sigma$ admits a spin structure compatible with the one chosen on $S$. Let $\Sigma$ have genus $g$ and $b$ boundary components. Its spin structures are in one-to-one correspondence with the elements of $H^{1}(\Sigma;\mathbb{Z}_2)$ \cite[Ch.~II, Thm.~1.7]{LawsonMichelsohn}, and hence there are $2^{2g+b-1}$ such structures. The induced spin structures on the boundary components are constrained by the requirement that the number of components carrying the Ramond (periodic) spin structure be even.

Consequently, if $T$ and $A$ are empty, so that $S$ is the only boundary component, the spin structure induced on $S$ must be Neveu-Schwarz. Thus in this case Theorem~2.2 applies only to the NS-sector quantities, independently of the genus of $\Sigma$. If $T$ or $A$ is non-empty, then for either choice of spin structure on $S$ one can choose the spin structures on the components of $T$ or $A$ so that this condition is satisfied. Hence Theorem~2.2 can be applied to both the Ramond and Neveu-Schwarz sectors whenever $T$ or $A$ is non-empty.

\subsection{Quasilocal energy and angular momentum in axisymmetry}

\label{sec:lemproof}

\subsubsection{Proof of Lemma \ref{lem:axisym}}

The proof of lemma \ref{lem:axisym} amounts to evaluating the operators ${\cal O}^{\pm}$ in axisymmetry, and relating the point spectra of these operators on an axisymmetric circle to the spectrum of the Laplacian intrinsic to the circle.
Since the spectrum of the intrinsic Laplacian is non-negative, the minimum of this determines the minima of the spectra of ${\cal O}^{\pm}$.

Let $m^a$ be the Killing vector field associated with axisymmetry.
We can choose $l$ and $n$ to be invariant w.r.t. $m$: ${\cal L}_ml={\cal L}_m n =0$.
This implies that both $\theta_{\text{out}}$ and $\theta_{\text{in}}$ are constants.
Additionally, one can show $\chi$ is a constant function on $S$ in this gauge.
We then obtain
\begin{equation}\label{eq:operators in axisymmetry}
    {\cal O}^{\pm} = \left(\Gamma^1 D_1\right)^2 + \frac{\theta_{\text{out}}\theta_{\text{in}}}{4} + \frac{1}{4}\left( \chi\mp\frac{1}{\ell}\right)^2
\end{equation}
The lowest eigenvalues of ${\cal O}^{\pm}$ are thus determined by the lowest eigenvalue of the operator $\left(\Gamma^1 D_1\right)^2$ - the square of (two copies of) the intrinsic Dirac operator to $S$\footnote{by the classical Schr\"odinger-Lichnerowicz identity, this is, equivalently, just (two copies of) the intrinsic Laplacian - up to a sign - as there is no scalar curvature.}.
As is well-known - see \cite{ginoux2009dirac}, for example - the spectrum of the Dirac operator on $S$ is given by 
\begin{equation}
    \text{spec}(\Gamma^1  D_1) =\frac{1}{r}\left(\mathbb{Z}+\frac{\delta^{\rm R/NS}}{2}\right).
\end{equation}
where $r$ is the circumference-radius of $S$.
Therefore, the minima of the spectra of ${\cal O}^{\pm}$ are
\begin{equation}\label{eq:e-values in axisymmetry}
    \Lambda_\pm^{{\rm R/NS}} = \frac{\delta^{\rm R/NS}}{4r^2} +\frac{\theta_{\text{out}}\theta_{\text{in}}}{4} + \frac{1}{4}\left( \chi\mp\frac{1}{\ell}\right)^2
\end{equation}
We now make two definitions, valid only in axisymmetry.
Firstly, the \textit{Komar angular momentum} of a circle of axisymmetry $S$ is
\begin{equation}
    J_{\rm Komar} \equiv \frac{1}{16\pi G}\int_S \star {\rm d}m = \frac{1}{8G}i_m \star {\rm d}m,
\end{equation}
where the volume form is $e^0 \wedge e^1 \wedge e^2$. In the gauge chosen above, $\chi$ can be seen to be simply related to the Komar angular momentum. Since $e_1^a = r^{-1} m^a$, the invariance of $l$ w.r.t. $m$ implies
\begin{equation}
    \chi = -e_1^a n^b \nabla_a l_b = -\frac{1}{r} l^{[a} n^{b]} \nabla_a m_b
\end{equation}
since $m$ is Killing.
We then use that
\begin{equation}
    l^{[a}n^{b]} = \frac{1}{2r}m^c\epsilon{_c}^{ab}
\end{equation}
which follows directly from the definitions of $n$ and $l$ and where $\epsilon$ denotes the volume form, to obtain
\begin{equation}
    \chi = -\frac{1}{2r^2} i_m \star {\rm d}m = \frac{-4GJ_{{\rm Komar}}}{r^2}
\end{equation}
Secondly, we define the \textit{renormalised Hawking mass} of $S$ as
\begin{equation}\label{eq:renormed hawking mass}
    \varpi \equiv -\frac{1}{8G}g^{-1}({\rm d}r,{\rm d}r) + \frac{r^2}{8G\ell^2}+ \frac{2GJ_{\rm Komar}^2}{r^2}=\frac{r^2}{8G} \theta_{\text{out}}\theta_{\text{in}} + \frac{r^2}{8G\ell^2}+ \frac{2GJ_{\rm Komar}^2}{r^2}
\end{equation}
This second equality can be seen by noting that $\theta_{\rm out}=\sqrt{2}l^a\nabla_a(\ln r)$ and likewise for $\theta_{\rm in}$.
So,
\begin{equation}
    \theta_{\rm in} \theta_{\rm out} = \frac{2}{r^2}l^a n^b ({\rm d}r)_a ({\rm d}r)_b = -\frac{1}{r^2}g^{-1}({\rm d}r,{\rm d}r)
\end{equation}
the second equality following as ${\rm d}r$ is normal to $S$.

Using these definitions in equation \eqref{eq:e-values in axisymmetry}, the non-extremalities associated with each sector become
\begin{equation}
    \Delta_{\pm}^{{\rm R/NS}} = 2G\left(\frac{\delta^{\rm R/NS}}{8G} + \varpi\pm \frac{J_{\rm Komar}}{\ell}\right)
\end{equation}
Hence, taking sums and differences, we see that, on a circle of axisymmetry in an axisymmetric spacetime,
\begin{equation}
    E^{\rm R/NS} = \varpi,\qquad J^{\rm R/NS} = J_{\rm Komar}
\end{equation}
which establishes the first result of the lemma. To establish the second result, we now evaluate these quasi-local quantities in the BTZ metric.


\subsubsection{BTZ solution}

The metric of the BTZ solution is
\be\label{eq:btz metric}
{\rm d}s^2=-\frac{(r^2-r_+^2)(r^2-r_-^2)}{r^2\ell^2}{\rm d}t^2+\frac{\ell^2r^2{\rm d}r^2}{(r^2-r_+^2)(r^2-r_-^2)}+r^2\left({\rm d}\phi-\frac{\varepsilon r_+r_-}{r^2\ell}{\rm d}t\right)^2\,
\ee
where $r_+ \ge r_- \ge 0$ and $\varepsilon \in \{1,-1\}$. Computing the quasilocal energy and angular momentum of a circle of axisymmetry (constant $t,r$) gives
\be
\label{BTZMJ}
 \varpi = \frac{r_+^2 + r_-^2}{8G\ell^2} \qquad \qquad J_{\rm Komar} = \frac{\varepsilon r_+r_-}{4G\ell}
\ee
These expressions are independent of $r$ and agree with the usual BTZ mass and angular momentum parameters. The extremal BTZ solution has $r_+=r_->0$ and hence
\be
\label{eq:params at extremality}
 \varpi^{\rm EBTZ} = \frac{r_+^2}{4G\ell^2} \qquad \qquad J^{\rm EBTZ} = \frac{\varepsilon r_+^2}{4G\ell}
\ee

\subsection{Quasilocal charges evaluated at infinity}\label{sec:asymptotic charges}

In the following, we will investigate how our definitions of quasi-local energy and angular momentum behave when the circle $S$ is taken to infinity in a spacetime that is asymptotically AdS$_3$ in the sense of Brown and Henneaux \cite{Brown:1986nw}. The asymptotic symmetry algebra of such a spacetime is two copies of the Virasoro algebra, both with the same central extension defined in terms of the AdS radius \cite{Brown:1986nw}. In Fefferman-Graham gauge the spacetime metric takes the form \cite{Compere:2018aar,Compere:2013bya}
\bea \label{eq:aads3}
        g_{tt} &=& -\frac{r^2}{\ell^2} + S+O\left(r^{-2}\right), \qquad
        g_{t\varphi} = \ell \delta + O\left(r^{-2}\right), \qquad
        g_{\varphi\varphi} = r^2 + \ell^2 S + O\left(r^{-2}\right), \nonumber \\
        g_{rr} &\equiv & \frac{\ell^2}{r^2}, \qquad g_{rt} \equiv 0, \qquad g_{r\varphi} \equiv 0
\eea
where the conformal boundary is at $r \rightarrow \infty$. Such an expansion can always be done for any conformally compact metric in a neighbourhood of the conformal boundary \cite{Skenderis:2002wp}. We assume that any matter in the bulk of the spacetime falls off in a way that is compatible with the above behaviour of the metric. $S$ and $\delta$ are functions of $t,\varphi$ that take the form
\begin{equation}
    \delta\equiv L_{-}-L_+,\qquad S\equiv L_++L_-
\end{equation}
where $L_\pm$ are periodic functions of $x^{\pm}\equiv t/\ell \pm \varphi$.
The inverse metric is
\begin{equation}
    \begin{split}
        g^{-1} = \left(-\frac{\ell^2}{r^2}-\frac{\ell^4 S}{r^4} + O\left(r^{-6}\right)\right)\partial_t\otimes\partial_t + &2\left(\frac{\ell^3\delta}{r^4} + O\left(r^{-8}\right)\right)\partial_t\otimes\partial_\varphi + \\&\frac{r^2}{\ell^2}\partial_r\otimes\partial_r + \left(\frac{1}{r^2} - \frac{\ell^2 S}{r^4} + O\left(r^{-6}\right)\right)\partial_\varphi\otimes\partial_\varphi
    \end{split}
\end{equation}
Consider, now, a spacelike hypersurface of constant $t$.
This can be foliated by spacelike circles $C_r$ of constant $r$, upon each of which we define the operator ${\cal O}^{\pm}({r})\equiv {\cal O}^\pm_{C_r}$ according to \eqref{Odef}. We aim to determine the large $r$-behaviour of the eigenvalues of ${\cal O}^{\pm}({r})$. Our orthonormal basis must have $e_1^a$ tangent to $C_r$ so we choose $e_1^a = g_{\varphi\varphi}^{-1/2} (\partial/\partial \varphi)^a$. We then choose $e_0$ normal to the surface of constant $t$ containing $C_r$, i.e., $(e_0)_a = -(-g^{tt})^{-1/2} (dt)_a$. This then fixes $(e_2)^a = g_{rr}^{-1/2}(\partial/\partial r)^a$. Our basis is therefore
\begin{equation}
    \begin{split}
           e_0 &= -\left(\frac{r}{\ell} - \frac{\ell S}{2r} + O\left(r^{-3}\right)\right){\rm d}t, \qquad e_2 = \frac{\ell}{r}{\rm d}r ,\\
    \qquad e_1 &=\left(r+\frac{\ell^2 S}{2r} + O\left(r^{-3}\right)\right){\rm d}\varphi + \left(\frac{\ell\delta}{r} + O\left(r^{-3}\right)\right){\rm d}t 
    \end{split}
\end{equation}
In this basis, we obtain
\begin{subequations}
    \begin{equation}
        \theta_{\rm out} = \frac{1}{\ell} - \frac{\ell S}{ r^2} - \frac{\ell^2 \partial_\varphi\delta}{2 r^3}+O\left(r^{-4}\right)
    \end{equation}
    \begin{equation}\label{eq:ingoing expansion laurent}
        \theta_{\rm in} = -\frac{1}{\ell}  + \frac{\ell S}{r^2}  - \frac{\ell^2\partial_{\varphi}\delta}{2r^3}+O\left(r^{-4}\right)
    \end{equation}
    \begin{equation}
        \chi = \frac{\ell \delta}{r^2} -\frac{\ell^3\delta S}{r^4} + O\left(r^{-6}\right)
    \end{equation}
\end{subequations}
From this we obtain
\begin{equation}
    {\cal O}^{\pm}(r) = \frac{{\cal L}_{L_\pm}}{r^2}+ O(r^{-3})
\end{equation}
where we have introduced the \textit{Hill operators} \cite{magnus2004hill}
\be
 {\cal L}_{L_\pm}(\varphi) \equiv -\partial_\varphi^2 + L_{\pm}
\ee
In the limit $r\to\infty$, $C_r$ approaches a cross-section $C_\infty$ of the conformal boundary, parametrised by $\varphi$. The Hill operators can be regarded as acting on fields defined on $C_\infty$. For large $r$, the circumference-radius of $C_r$ is just $r$. Hence we can compute the large $r$ limit of the non-extremalities \eqref{nonextdef} associated with each sector (R or NS) as
\begin{equation}
\label{quasi_hill}
    \Delta_{\pm}^{\rm R/NS}(C_\infty) \equiv \lim_{r\rightarrow \infty} \Delta_{\pm}^{\rm R/NS}(C_r) =\mu^{\rm R/NS}[L_\pm]
\end{equation}
where $\mu^{\rm R/NS}[L_\pm]$ denote the lowest eigenvalue of the Hill operators in each sector.
We see that the non-extremalities in each sector are functionals of the undetermined parts of the metric \eqref{eq:aads3}.
Using the min-max theorem, we can express these as\footnote{From equation \eqref{eq:min-max}, these eigenvalues are bounded from below by $\delta^{R/NS}/4+ \min L_\pm$. 
}
\begin{equation}\label{eq:min-max}
    \mu^{\rm R/NS}[L_\pm] = \min_{\psi}\frac{\langle\psi,\left(-\partial_\varphi^2 + L_{\pm}\right)\psi\rangle_{C_\infty}}{\langle\psi,\psi\rangle_{C_\infty}}
\end{equation}
where we evaluate this in the boost-invariant inner product defined in \eqref{eq:boost inv ip}, with $\theta_{\rm in}$ taking its $r\to\infty$ value in \eqref{eq:ingoing expansion laurent} (in this gauge, the limit of the boost-invariant inner product is directly proportional to the standard $L^2$ inner product on such spinors).

Using the above result, our definitions give the energy and angular momentum of $C_\infty$ as
\begin{equation}
    \begin{split}
            E^{\rm R/NS}(C_\infty) &= \frac{1}{4G}\left(\mu^{\rm R/NS}[L_+]+\mu^{\rm R/NS}[L_-]\right) - \frac{\delta^{\rm R/NS}}{8G},\\ J^{\rm R/NS}(C_\infty) &= \frac{\ell}{4G}\left(\mu^{\rm R/NS}[L_+]-\mu^{\rm R/NS}[L_-]\right)
    \end{split}
\end{equation}
The standard definitions \cite{Brown:1986nw} of energy and angular momentum in AdS$_3$ are determined only by the zero modes (constant parts) of $L_\pm$. However, the eigenvalues of the Hill operators depend on the entirety of the functions $L_\pm$, not just on their zero modes and so our definitions are not determined solely by these zero modes. Hence, in general, the large $r$ limits of our quasilocal definitions do not agree with the standard definitions. However, when $L_{\pm}$ are constants - call them $\bar L_\pm$ - the situation is far simpler: one can easily compute the lowest eigenvalues to be
\begin{equation}\label{eq:min of hill}
    \Delta^{\rm R/NS}_{\pm} =\mu^{\rm R/NS} [\bar L_{\pm}] = \frac{\delta^{\rm R/NS}}{4} + \bar L_\pm
\end{equation}
Thus, the $E^{\rm R/NS}(C_\infty)$ and $J^{\rm R/NS}(C_\infty)$ reduce to
\begin{equation}
    E^{\rm R/NS}(C_\infty) = \frac{1}{4G}\left(\bar L_+ + \bar L_-\right),\qquad J^{\rm R/NS}(C_\infty) = \frac{\ell}{4G}\left(\bar L_+-\bar L_-\right)
\end{equation}
in agreement with the standard definitions \cite{Coussaert:1993jp, Compere:2018aar}. In particular this holds for the case of an axisymmetric spacetime for which the Fefferman-Graham gauge is adapted to axisymmetry.

Now we will discuss how our charges transform under the asymptotic symmetries of AdS$_3$. The asymptotic symmetry algebra consists of two copies of the (centrally extended) Virasoro algebra \cite{Brown:1986nw}. Asymptotic symmetries act on the conformal boundary as $x^\pm \mapsto h_\pm(x^\pm)$, which gives a conformal transformation of the boundary metric. This takes one out of the Fefferman-Graham gauge used above so a further coordinate transformation is required to bring the metric back to this gauge (see equation (2.26) of \cite{compere2016symplectic}). This results in a change of the functions $L_\pm$:
\begin{equation}\label{eq:Hill potential change}
    L_\pm(x^\pm)\mapsto \widetilde L_\pm \left(h_\pm(x^\pm)\right) \equiv (h_\pm'(x^\pm))^{-2}\left\{L_\pm\left(x^\pm\right) +\frac{1}{2}\left({\rm S}h_\pm\right)(x^\pm)\right\}
\end{equation}
where
\begin{equation}
    \left({\rm S}h\right)(x) \equiv \frac{h'''}{h'}-\frac{3}{2}\left(\frac{h''}{h'}\right)^2
\end{equation}
is the Schwarzian derivative of $h(x)$ w.r.t. $x$. The form of this transformation is dictated by the fact that $L_\pm$ are vectors in the coadjoint representation of the Virasoro algebra \cite{Oblak:2016eij}.

To investigate how our charges transform under an asymptotic symmetry we now restrict to asymptotic symmetries that preserve the cross-section $C_\infty$. Recall that this is a cross-section of constant $t$, say $t=0$, so invariance of $C_\infty$ requires that $h_+(\varphi)=-h_-(-\varphi)$, i.e., $C_\infty$ is preserved by a ``diagonal'' Virasoro subalgebra of our original asymptotic symmetry algebra. Using \eqref{eq:Hill potential change}, a tedious exercise shows that the Hill operators transform as \cite{balog1998coadjoint, Oblak:2016eij} 
\begin{equation}\label{eq:transformed Hill}
    {\cal L}_{L_+}[\cdot]\mapsto {\cal L}_{\widetilde L_+}[\cdot] = \left(h_+'\right)^{-\frac{3}{2}} {\cal L}_{L_+}\left[ \left(h_+'\right)^{-\frac{1}{2}}\cdot \right]
\end{equation}
and likewise for ${\cal L}_{L_-}$ with $h_+(\varphi)$ replaced by $-h_+(-\varphi)$. The form of this transformation implies that the eigenvalues of the Hill operators do {\it not} transform simply under asymptotic symmetries that preserve $C_\infty$.\footnote{Note that if $\psi$ is an eigenfunction of ${\cal L}_{L_+}$ with non-zero eigenvalue then $(h_+')^{1/2}\psi$ is {\it not} an eigenfunction of ${\cal L}_{\widetilde L_+}$.} Thus in general the asymptotic limits of our quasilocal charges do not transform simply. From a bulk perspective, the non-trivial nature of the transformation arises from the fact an asymptotic symmetry that preserves $C_\infty$ in general will not preserve the constant $t$ surface ending at $C_\infty$. Hence applying an asymptotic symmetry to the large $r$ limit of our quasilocal quantities corresponds to computing the limit to $C_\infty$ along a different sequence of finite circles $C_r$. 

There is an important case for which our charges do transform simply. This is the situation in which one of our non-extremalities vanishes (at infinity), corresponding to the preservation of some supersymmetry. From \eqref{quasi_hill} this corresponds to the existence of a zero-mode for one of the Hill operators.\footnote{A relation between zero modes of the Hill equation and supersymmetry was established in \cite{OColgain:2016msw}.} We observe from equation \eqref{eq:transformed Hill} that if a spinor $\psi$ is a zero-mode of ${\cal L}_{L_+}$ then $\tilde\psi \equiv \left(h_+'\right)^{1/2}\psi$ is a zero mode of ${\cal L}_{\widetilde L_+}$. Similarly for a zero-mode of ${\cal L}_{L_-}$ with the replacement of $h_+(\varphi)$ by $-h_+(-\varphi)$. It follows that the vanishing of, say, $\Delta^R_+[C_\infty]$ is a statement that is invariant under asymptotic symmetries that preserve $C_\infty$. 

This discussion also helps us address the question of whether our quasilocal energy is really an energy, rather than a mass. First, let us make this distinction clearer by way of analogy with flat space. In four-dimensional asymptotically flat spacetimes, the asymptotic symmetry group at null infinity is ${\rm BMS}_4$, an enhancement of the Poincar\'e group to include supertranslations. There is a canonical translation subgroup from which we construct the Bondi four-momentum. The Minkowski norm of this is the Bondi mass. The latter is invariant under the action of the BMS group, while the former is not \cite{Sachs:1962zza}. In an asymptotically AdS$_3$ spacetime we might similarly expect a definition of mass to be invariant under the asymptotic symmetry group whereas an energy would not. Indeed the recent work \cite{Rallabhandi:2025iyk} has provided a definition of quasilocal mass for 4d spacetimes with negative cosmological constant which has this property when evaluated at infinity. However, the above discussion demonstrates that the large $r$ limit of our quantities transform non-trivially under asymptotic symmetries. Furthermore, although they do not always agree with the standard definitions at infinity, the fact that they give rise to BPS inequalities of the standard form (Theorem \ref{thm:bps}) suggests that they are more energy-like than mass-like.

\section{Formation of extremal BTZ in gravitational collapse}

\label{sec:gluing}

\subsection{Equations of motion}

In this section, we consider a particular form of matter. Specifically, we take a simple example: a massless \emph{complex} scalar field $\Phi$ in AdS$_3$. The action is given by
\begin{equation}
\label{action}
S=\frac{1}{16\pi G}\int_{\mathcal{M}}{\rm d}^3x\sqrt{-g}\left[R+\frac{2}{\ell^2}-2\nabla_a\Phi \nabla^a \overline{\Phi}\right]\,,
\end{equation}
where the overline denotes complex conjugation. This can be obtained as a consistent truncation of type IIB supergravity with AdS$_3 \times S^3 \times \mathbb{T}^4$ asymptotics (see Section~8 of~\cite{Marolf:2021kjc}). The equations of motion are
\begin{subequations}
\begin{align}
&R_{ab}-\frac{R}{2}g_{ab}-\frac{g_{ab}}{\ell^2}= \nabla_a \Phi \nabla_b \overline{\Phi}+\nabla_a \overline{\Phi} \nabla_b \Phi-g_{ab}\nabla_c \Phi \nabla^c \overline{\Phi}\,,
\\
& \Box \Phi=0\,.
\end{align}
\end{subequations}

We now introduce double null coordinates $(U,V,\varphi)$, with $\varphi\sim \varphi+2\pi$, and take the complex scalar field $\Phi$ to have a simple harmonic dependence on $\varphi$, given by
\begin{equation}
\Phi(U,V,\varphi)=e^{{\rm i}\,m\,\varphi}\,\Xi(U,V)\,,
\end{equation}
where $m\in\mathbb{Z}$. The resulting energy-momentum tensor is independent of $\varphi$ hence the metric may be assumed to be axisymmetric, with Killing vector $\partial/\partial \varphi$. Using some residual freedom\footnote{
For example a term $-W_V(U,V)dV$ inside the square brackets can be eliminated by a shift $\phi \rightarrow \phi + \Phi(U,V)$ for suitable $\Phi$.} in defining double null coordinates the metric can be written
\begin{equation}
{\rm d}s^2=-\Omega^2(U,V){\rm d}U\,{\rm d}V+r(U,V)^2\left[{\rm d}\varphi-W_U(U,V){\rm d}U\right]^2\,.
\label{eq:lineUVvarphi}
\end{equation}
Solutions of this form break axisymmetry only in the matter sector and have been used in \cite{Dias:2025uyk,Dias:2026xuy} to construct hairy black holes with AdS$_3$ asymptotics with non-trivial boundary conditions.

Since the metric is axisymmetric, the results of section \ref{sec:lemproof} imply that, for a circle of constant $U,V$, in both R and NS sectors, our quasilocal energy coincides with the renormalized Hawking mass $\varpi$ and our quasilocal angular momentum $J$ coincides with the Komar angular momentum. For the above metric these are given by (henceforth we use units $8G=1$)
\be
\label{eq:JW}
   J(U,V) = \frac{2r^3}{\Omega^2}\partial_VW_U
\ee
and
\be
\varpi(U,V) = \frac{r^2}{\ell^2}+\frac{4\partial_V r\partial_Ur}{\Omega^2} + \frac{J^2}{4r^2}
\ee
We will also sometimes make use of the (unrenormalized) Hawking mass $m_H$ defined by
\begin{equation}
m_{\rm H}(U,V) \equiv \varpi -\frac{J^2}{4r^2} = \frac{r^2}{\ell^2}+\frac{4\partial_V r\partial_Ur}{\Omega^2}.
\end{equation}
Unlike $\varpi$, this is {\it not} constant in the BTZ spacetime but it agrees with the BTZ mass when evaluated for a circle at infinity.

The equations of motion decompose into several sectors:
\begin{subequations}
\paragraph{Raychaudhuri's equations}
\begin{align}
&\partial_V\left(\frac{\partial_V r}{\Omega^2}\right)+\frac{2}{\Omega^2}\left|\partial_V \Xi\right|^2r=0
\label{eq:1a}
\\
&\partial_U\left(\frac{\partial_U r}{\Omega^2}\right)+\frac{2}{\Omega^2}\left|\mathcal{D}_U \Xi\right|^2r=0
\label{eq:1b}
\end{align}

\paragraph{The angular momentum equations}
\begin{align}
&\partial_V J=4\,m\,{\rm Im}(\overline{\Xi}\,\partial_V \Xi)\,r
\label{eq:j}
\\
&\partial_U J=4\,m\,{\rm Im}(\Xi\,\overline{\mathcal{D}_U\Xi})\,r
\label{eq:jU}
\end{align}

\paragraph{The wave equations}
\begin{align}
&\partial_U\partial_Vr-\frac{J^2\Omega^2}{8r^3}-\frac{m^2|\Xi|^2}{2r}\Omega^2+\frac{r\Omega^2}{2\ell^2}=0
\label{eq:dur}
\\
& \partial_V \partial_U \Xi+\frac{\partial_V r}{2r}\mathcal{D}_U\Xi+\frac{\partial_U r}{2r}\partial_V \Xi+{\rm i}\,m\,\frac{\Omega^2J}{4r^3}\Xi+m^2\frac{\Omega^2}{4r^2}\Xi+{\rm i}\,m\,W_U \partial_V\Xi=0\,,
\label{eq:Xi}
\\
&\frac{2\partial_U \partial_V \Omega}{\Omega}-2\frac{\partial_V \Omega\,\partial_U \Omega}{\Omega^2}+\frac{m^2 |\Xi|^2\Omega^2}{2r^2}+\frac{3\Omega^2 J^2}{8r^4}+2\,{\rm Re}(\mathcal{D}_U \Xi\,\partial_V \overline{\Xi})+\frac{\Omega^2}{2\ell^2}=0
\end{align}
\label{eqs:total}
\end{subequations}
where $\mathcal{D}_U\Xi\equiv \partial_U \Xi+{\rm i}\,m\,W_U\,\Xi$. The equations of motion imply the following equations for the Hawking mass:
\begin{subequations}
\begin{align}
\partial_V m_{\rm H} &= \left(\frac{J^2}{2r^3}+\frac{2m^2 |\Xi|^2}{r}\right)\partial_V r+\frac{8 r}{\Omega^2}\left|\partial_V\Xi\right|^2(-\partial_U r),
\\
\partial_U m_{\rm H} &= -\frac{8 r}{\Omega^2}\left|\mathcal{D}_U\Xi\right|^2\partial_V r-\left(\frac{J^2}{2r^3}+\frac{2m^2 |\Xi|^2}{r}\right)(-\partial_U r).
\end{align}
\end{subequations}
From these we see that if $\partial_V r \ge 0$ and $\partial_U r \le 0$ then $m_H$ satisfies the monotonicity properties $\partial_V m_H \ge 0$ and $\partial_U m_H \le 0$.

\subsection{Characteristic gluing}

\begin{figure}[ht]
    \centering
    \includegraphics[width=0.5\linewidth]{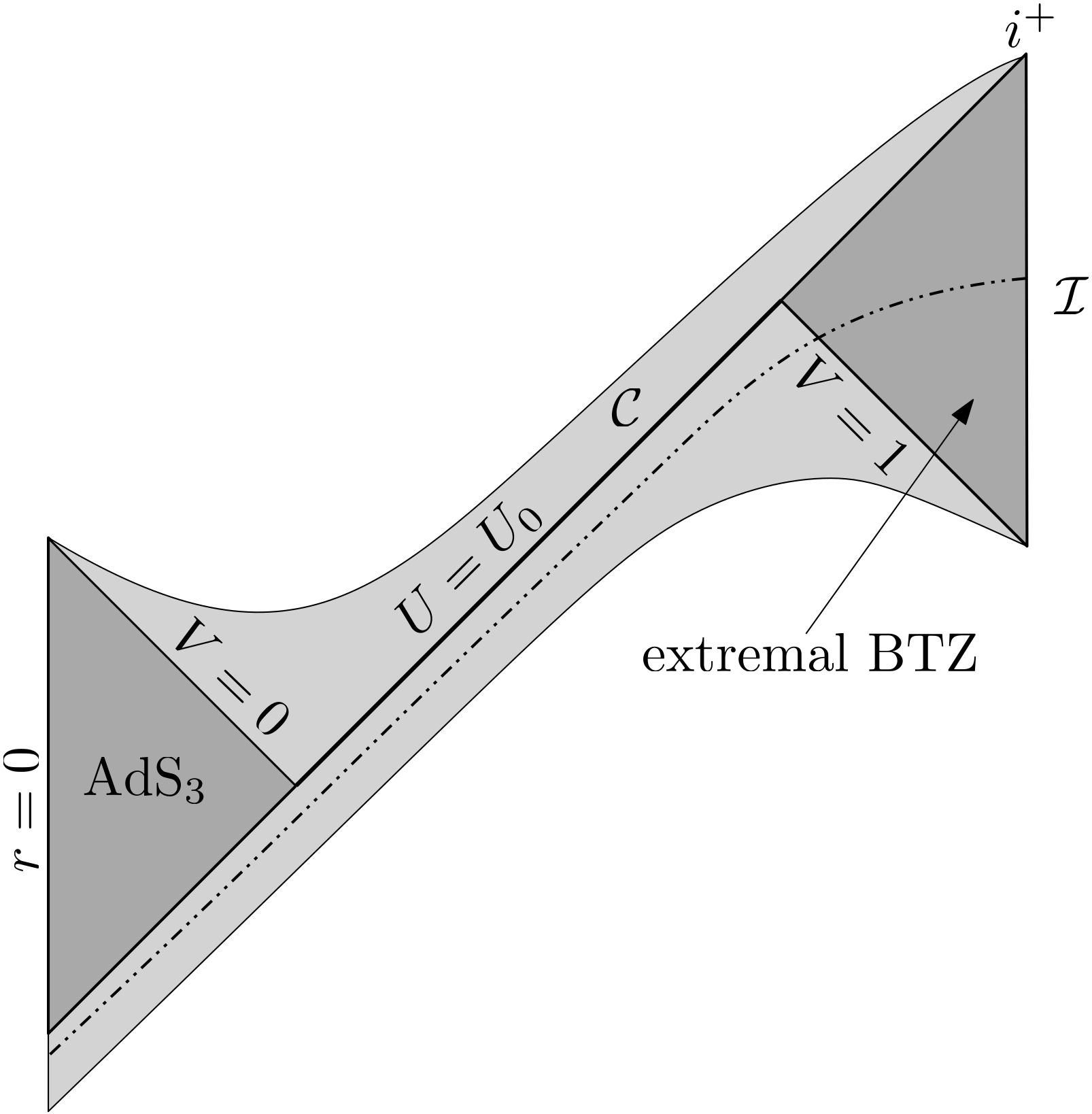}
    \caption{
    Spacetime describing formation of an extremal BTZ black hole in finite time through gravitational collapse of a scalar field. The dark regions are exactly isometric to subsets of the extremal BTZ and AdS$_3$ spacetimes. The light-shaded regions are obtained by solving characteristic initial/final value problems. These are local solutions: the construction does not determine how far they extend to the future/past. The dash-dotted line denotes a spacelike hypersurface corresponding to an instant of time before the black hole forms. The initial data on this surface describes a cloud of scalar field inside an exactly extremal BTZ region. Cauchy stability ensures the solution in the light shaded region can be extended to the centre of symmetry $r = 0$ using an argument similar to \cite{Kehle:2022uvc}.}
    \label{fig:collapse}
\end{figure}

Our aim is to construct a solution of the above equations which has a Penrose diagram of the form shown in Fig. \ref{fig:collapse}. The spacetime contains a region that is isometric to a region of AdS$_3$ containing the origin $r=0$ and another region that is isometric to the region outside the horizon of an extremal BTZ black hole. The two regions are connected by a null hypersurface $\mathcal{C}$. From the Penrose diagram it is clear that this spacetime describes a black hole and $\mathcal{C}$ is the event horizon of the black hole. The AdS$_3$ region lies inside the event horizon at early time. The entire spacetime describes gravitational collapse of a cloud of scalar field to form an extremal BTZ black hole. It is highly fine-tuned, for example the scalar field does not enter the AdS$_3$ or BTZ regions, but of course any process leading to the formation of an extremal black hole must be fine-tuned. 

Following \cite{Kehle:2022uvc} (see also~\cite{Gadioux:2025unn,Crump:2026kgu}) we will construct such a solution using {\it characteristic gluing}~\cite{gluing_apps, gluing_linear,gluing_kerr}. This is based on the characteristic initial value problem \cite{characteristic_rendall,characteristic_luk}, which involves specifying suitable data on null hypersurfaces.
We take $\mathcal{C}$ to have equation $U=U_0$. We will prescribe data on $\mathcal{C}$ that for $V \le 0$ agrees with the data on an outgoing null cone in AdS$_3$ and for 
$V \ge 1$ agrees with the data on the event horizon of an extremal BTZ black hole of radius $r_+$. 

We can now attach the AdS$_3$ and extremal BTZ regions to $\mathcal{C}$ as shown in Fig. \ref{fig:collapse}. The next step is to consider the characteristic initial value problem given by the data on the intersecting null hypersurfaces $\{U = U_0, V \ge 0\}$ and $\{V = 0, U \ge U_0\}$. The data on the latter is exactly AdS$_3$. This problem is locally well-posed~\cite{characteristic_rendall,characteristic_luk} and so the solution can be extended to the future of these hypersurfaces into the light grey region shown to the future of $\mathcal{C}$ in Fig. \ref{fig:collapse}. Similarly we can consider the characteristic {\it final} value problem given 
by the data on the intersecting null hypersurfaces $\{U = U_0, V \le 1\}$ and $\{V = 1, U \le U_0\}$. The data on the latter is exactly extremal BTZ. Well-posedness ensures that we can extend the solution to the past of these hypersurfaces into the light grey region to the past of $\mathcal{C}$ in Fig. \ref{fig:collapse}. {\it A priori} it is not clear that this solution will be regular at $r=0$, we discuss this further below. 

We have now constructed a solution to the future of $\mathcal{C}$ and another to the past of $\mathcal{C}$ that match continuously across $\mathcal{C}$. However, there is no reason that this solution will be differentiable across $\mathcal{C}$. In particular, we want the metric and scalar field to be $C^k$ across $\mathcal{C}$, with $k \ge 2$ so that the equations of motion are satisfied on $\mathcal{C}$ without having to resort to the notion of weak solutions (in fact we'll see that regularity at the origin requires $k \ge 3$). So the problem is to choose the ``free'' data in the interpolating region $0<V<1$ of $\mathcal{C}$ to ensure that the resulting solution is $C^k$ across $\mathcal{C}$. 

We aim to construct a solution for which the metric and scalar field are $C^k$ across $\mathcal{C}$. Equations \eqref{eq:j} and \eqref{eq:jU} will then imply that $J$ is $C^{k+1}$. Our metric \eqref{eq:lineUVvarphi} admits a residual gauge freedom 
\begin{equation}
\label{eq:gauge}
\widetilde{U}=f(U)\qquad \widetilde{V}=g(V)\qquad \widetilde{\Omega^2}=\frac{\Omega^2}{{f^\prime(U)} {g^\prime(V)}}\,\qquad \widetilde{W_U} = \frac{W_U}{f^\prime(U)}.
\end{equation}
which of course leaves the quasilocal quantities invariant. We use this freedom to set 
\be
\Omega(U_0,V) \equiv 1 \qquad \qquad \partial^j_U\Omega(U_0,0) = 0 \qquad (1 \leq j \leq k)
\ee
This appears to leave the freedom $g'(V) = 1/f'(U_0)$, which corresponds to a linear scaling of $V$. However, this is eliminated by our conditions that the matching to AdS$_3$ and extremal BTZ occurs at $V=0,1$. 

The free data are $r_+$, the parameter $m\in\mathbb{Z}$ and the field $\Xi$ on $\mathcal{C}$, which is chosen to be at least $C^k$ and vanish for $V<0$ and $V>1$. Once a profile for $\Xi$ is selected, we integrate \eqref{eq:1a} backwards from $V=1$ to $V=0$, subject to $r(U_0,1)=r_+$ and $\left.\partial_V r(U_0,V)\right|_{V=1}=0$ (the latter because $r$ is constant for $V \ge 1$). This determines $r(U_0,V)$ on $\mathcal{C}$, which decreases monotonically when $V$ decreases. We require $r(U_0,0)>0$: if this condition fails then we reject the chosen profile for $\Xi$. Next, using \eqref{eq:j} with $J(U_0,0)=0$, we integrate forward to determine $J(U_0,V)$. We need to choose the free data so that $J(U_0,1)=J^{\rm EBTZ}$. Now, $W_U(U_0,V)$ is determined by integrating \eqref{eq:JW}, with $W_U=0$ at $V=0$. 

Next we consider the behaviour of $\partial_U r$ along $\mathcal{C}$. We do not want any part of $\mathcal{C}$ to live inside the white hole region of the BTZ spacetime and so we require $\partial_U r<0$ on $\mathcal{C}$.\footnote{
If this condition is not satisfied then the gluing construction would produce a spacetime in which the exactly BTZ portion of the spacetime includes the full region outside the BTZ black hole and white hole horizons, as well as part of the region inside the BTZ white hole horizon. It would describe a white hole with a regular centre that collapses to form a BTZ black hole.} At $V=0$ the matching to AdS$_3$ fixes $m_H = \varpi=-1$ which determines $\partial_U r$ at $V=0$. We can then propagate $\partial_U r$ along $\mathcal{C}$ using \eqref{eq:dur}. If $\partial_U r$ does not remain negative then we reject the choice of free data.

Finally we want to impose the condition that our solution is $C^k$ across $\mathcal{C}$ so we need to examine the transverse derivatives $\partial_U^j \Omega$, $\partial_U^j r$, $\partial_U^j \Xi$ and $\partial_U^j W_U$. These quantities satisfy first order transport equations on $\mathcal{C}$ that are obtained by taking $U$-derivatives of equations of motion. Therefore they are determined either by starting at $V=0$, where they are fixed by matching to AdS$_3$, and integrating forwards or starting at $V=1$, where they are fixed by matching to extremal BTZ, and integrating backwards. We need these two choices to agree. They will agree if the forward integration gives results that match extremal BTZ at $V=1$. To do this matching, we can apply the gauge transformation \eqref{eq:gauge} (with $g'(V) = 1/f'(U_0)$) to the extremal BTZ solution in $\Omega(U_0,V)\equiv 1$ gauge: by adjusting $f'(U_0)$ and $f^{(j+1)}(U_0)$ we ensure that $\partial_U r$ and $\partial_U^j \Omega$ ($1 \le j \le k$) match the results of our forward integration at $V=1$. Similarly a gauge transformation $\phi \rightarrow \phi + \Lambda(U)$ can be applied to extremal BTZ to ensure matching of $\partial_U^j W_U$ ($0 \le j \le k$) at $V=1$. 

The matching of $\partial^j_U \Xi(U_0,V)$ at $V=1$ cannot be ensured by any residual gauge freedom: matching these quantities to their extremal BTZ values (zero) gives $k$ complex constraints on our free data, equivalently $2k$ real constraints. Equation \eqref{eq:1b} (or its $U$-derivatives) then guarantees that $\partial_U^j r$ matches its extremal BTZ value at $V=1$. 

In summary, to obtain a solution that is $C^k$ across $\mathcal{C}$ we must find free data satisfying the $2k+1$ real constraints coming from the matching of $J$ and $\partial^j_U \Xi$ as well as the conditions $r(U_0,0)>0$ and $\partial_U r<0$ on $\mathcal{C}$. 

A solution obtained this way will be $C^k$ everywhere that the coordinates are regular. However, our coordinates break down at the origin $r=0$. This implies that the solution might not be $C^k$ at $r=0$ in the region to the past of $\mathcal{C}$. Just as in \cite{Kehle:2022uvc}, the solution is expected to ``lose a derivative'' at $r=0$, where it will be only $C^{k-1}$. One way to see this is to use an argument based on Cauchy stability \cite{Kehle:2022uvc} (see also \cite{Crump:2026kgu} for a discussion in 5d, where one loses {\it two} derivatives at the origin). Thus in order to ensure that our solution is $C^2$ everywhere we will demand $k \ge 3$. 

As stated above, the free data are the integer $m$, the extremal BTZ horizon radius $r_+$, and the function $\Xi(U_0,V)$ for $V\in[0,1]$ (with $\Xi(U_0,V)=0$ for $V\notin[0,1]$). For the latter we make the ansatz
\begin{equation}
\Xi(U_0,V)
=
P_{\delta_1,\delta_2}(V)
f^{(1)}(V)
e^{{\rm i}\,V\,f^{(2)}(V)} ,
\label{eq:prof}
\end{equation}
where $f^{(1)}(V)$ and $f^{(2)}(V)$ are parametrised by single-hidden-layer neural networks with a $\tanh$ activation function,
\begin{equation}
f^{(i)}(V)
=
\chi^{(i)}_0
+
\sum_{j=1}^{p}
\chi^{(i)}_j
\tanh\left(\mu^{(i)}_j V+\kappa^{(i)}_j\right).
\label{eq:NNprofile}
\end{equation}
Here $\chi^{(i)}_0$, $\chi^{(i)}_j$, $\mu^{(i)}_j$, and $\kappa^{(i)}_j$ are free parameters. This parametrisation was recently used in \cite{Crump:2026kgu}. For further details regarding the numerical implementation and the parameter selection used to achieve the desired gluing order, we refer the reader to \cite{Crump:2026kgu}. The ansatz therefore contains a total of $6p+2$ free parameters.

The function $P_{\delta_1\,\delta_2}(V)$ ensures the required regularity at the endpoints of the gluing interval. Following the construction of \cite{Crump:2026kgu}, we choose $P_{\delta_1\,\delta_2}$ to vanish at $V=0$ and $V=1$, while remaining equal to unity in the bulk of the interval. Since in the present construction we perform a $C^3$ gluing, we take
\begin{equation}
P_{\delta_1\,\delta_2}(V)=
\begin{cases}
I_{V/\delta_1}(4,4),
& 0\leq V\leq\delta_1,\\[1mm]
1,
& \delta_1<V<1-\delta_2,\\[1mm]
I_{(1-V)/\delta_2}(4,4),
& 1-\delta_2\leq V\leq1,
\end{cases}
\label{eq:Pprofile}
\end{equation}
where $I_x(a,b)$ denotes the regularized incomplete beta function,
\begin{equation}
I_x(a,b)\equiv \frac{B(x;a,b)}{B(a,b)}.
\end{equation}
For the case relevant here,
\begin{equation}
I_x(4,4)
=
x^4\left(35-84x+70x^2-20x^3\right).
\end{equation}
Consequently,
\begin{equation}
P_{\delta_1\,\delta_2}(0)=P_{\delta_1\,\delta_2}(1)=0,
\qquad
P_{\delta_1\,\delta_2}^{(j)}(0)=P_{\delta_1\,\delta_2}^{(j)}(1)=0,
\qquad j=1,2,3,
\end{equation}
and the same vanishing of the first three derivatives holds at the transition points $V=\delta_1$ and $V=1-\delta_2$. We extend the profile by setting $P_{\delta_1\,\delta_2}(V)=0$ outside the gluing interval $V\in[0,1]$. Thus $P_{\delta_1\,\delta_2}$ is $C^3$, and piecewise $C^4$, while allowing the nontrivial gluing data to be supported entirely within $0<V<1$.

\subsection{Results}

In this section, we specialise to $k=3$ so the resulting solution is $C^2$ everywhere including at $r=0$. We are primarily interested in the existence problem, rather than in determining the minimum value of $m$ for which a given gluing can be achieved at fixed $r_+/\ell$. Following the numerical procedure outlined in \cite{Crump:2026kgu}, we find that the optimal values of $\delta_1$ and $\delta_2$ depend only weakly on $r_+/\ell$. In particular, we have found solutions for
\begin{equation}
\{r_+/\ell,m,\delta_1,\delta_2\}
=
\{0.2,4,0.1,0.1\},\qquad
\{1,5,0.05,0.1\},\qquad
\{10,400,0.05,0.1\}.
\end{equation}
For the first two cases we used $p=7$, while for the largest black hole we used $p=12$. The explicit parameter values used to construct the representative solutions shown in this paper, together with Mathematica code for reconstructing the corresponding scalar-field profiles, are available in the accompanying (\href{https://github.com/jorgealberich/BTZ_Gluing_Data}{GitHub}) online repository. We provide the full parameter sets for each of the solutions discussed in the text, using the notation of Eqs.~\eqref{eq:prof} and~\eqref{eq:NNprofile}. In our numerical investigations, we found that constructing solutions with large values of $r_+/\ell$ becomes increasingly challenging. In particular, larger values of $m$ are required. Some insight into this behaviour can be obtained from a simple scaling argument. Define
\begin{multline}
r(U,V)=r_+\,\tilde{r}\left(\frac{U}{\ell^2},V\right), \qquad
W_U(U,V)=\frac{1}{\ell r_+}\tilde{W}_{\tilde{U}}\left(\frac{U}{\ell^2},V\right),
\\
J(U,V)=\frac{r_+^2}{\ell}\tilde{J}\left(\frac{U}{\ell^2},V\right),
\qquad
\Xi(U,V)=\tilde{\Xi}\left(\frac{U}{\ell^2},V\right).
\end{multline}
The equations satisfied by the rescaled variables $\tilde{r}$, $\tilde{J}$, $\tilde{\Xi}$, and $\tilde{W}_{\tilde{U}}$ take exactly the same form as Eqs.~\eqref{eqs:total}, with $m$ replaced by $m\ell/r_+$ and $U$ replaced by $\tilde{U}=U/\ell^2$. The boundary conditions at $V=1$ become
\begin{equation}
\tilde{J}(\tilde{U}_0,1)=2, \qquad
\tilde{r}(\tilde{U}_0,1)=1, \qquad
\left.\partial_V\tilde{r}(\tilde{U}_0,V)\right|_{V=1}=0.
\end{equation}
At this stage, $m$ and $r_+$ enter only through the combination $m\ell/r_+$. Thus, given a solution for $r$ and $J$ at some value of $r_+$, one might expect to obtain a corresponding solution at a larger value of $r_+$ simply by increasing $m$ appropriately.

There is, however, one important exception to this scaling argument. Recall that $\partial_U r$ satisfies a transport equation along $\mathcal{C}$, with its initial value fixed by the matching condition for the AdS$_3$ Hawking mass. In terms of our rescaled variables this is
\begin{equation}
\left.\partial_{\tilde{U}}\tilde{r}(\tilde{U},0)\right|_{\tilde{U}=\tilde{U}_0}=-\frac{1}{4\left.\partial_V \tilde{r}(\tilde{U}_0,V)\right|_{V=0}}\left[\tilde{r}(\tilde{U}_0,0)^2+\frac{\ell^2}{r_+^2}\right]\,.
\end{equation}
Here $r_+$ appears independently of $m$. Consequently, the solution for $\partial_U r$ along $\mathcal{C}$ at different values of $r_+$ cannot be obtained simply by rescaling $m$. Moreover, increasing $r_+/\ell$ makes the right-hand side less negative. This, in turn, makes it easier for $\partial_U r$ to become positive as it is transported along $\mathcal{C}$, in which case the gluing would take place inside a BTZ white hole.

Indeed, for large values of $r_+/\ell$, we found many solutions for which $\partial_U r>0$ near $V=1$. The main numerical challenge was therefore not to find solutions satisfying $J=J^{\rm EBTZ}$, but rather to move within this family of solutions into a regime in which $\partial_U r<0$ throughout $\mathcal{C}$. To achieve this, we exploited the freedom afforded by our overparametrised scalar profile and numerically continued along the space of solutions satisfying $J=J^{\rm EBTZ}$. Starting from a converged solution, we searched for nearby solutions while biasing the continuation towards profiles for which $\partial_U r$ became progressively more negative. Repeating this procedure allowed us to move along the solution manifold until reaching configurations for which $\partial_U r<0$ along the entire gluing surface.

The profiles we find also exhibit a nontrivial dependence on $r_+/\ell$. For $r_+=0.2\,\ell$ and $m=4$ (see Fig.~\ref{fig:profilessmall}), the profiles are relatively broadly distributed along $\mathcal{C}$. By contrast, for $r_+=\ell$ and $m=5$, they closely resemble those found in \cite{Crump:2026kgu}, with most of their structure concentrated near $V=0$ (see Fig.~\ref{fig:profilesmedium}). For $r_+=10\,\ell$ and $m=400$, the profiles become delocalised once again, and instead resemble those found in \cite{Gadioux:2025unn} and employed in the pioneering construction of \cite{Kehle:2022uvc}.

It is likely that, for $r_+=10\,\ell$, one could substantially reduce $m$ and find solutions whose profiles are closer to those shown in Fig.~\ref{fig:profilesmedium}. We have not attempted a systematic exploration in this direction, since our primary goal here is to establish existence. Nevertheless, a more complete investigation of the solution space, including the dependence of the profile morphology and the minimum required value of $m$ on $r_+/\ell$, would clearly be interesting.

\begin{figure}[ht]
    \centering
    \includegraphics[width=\linewidth]{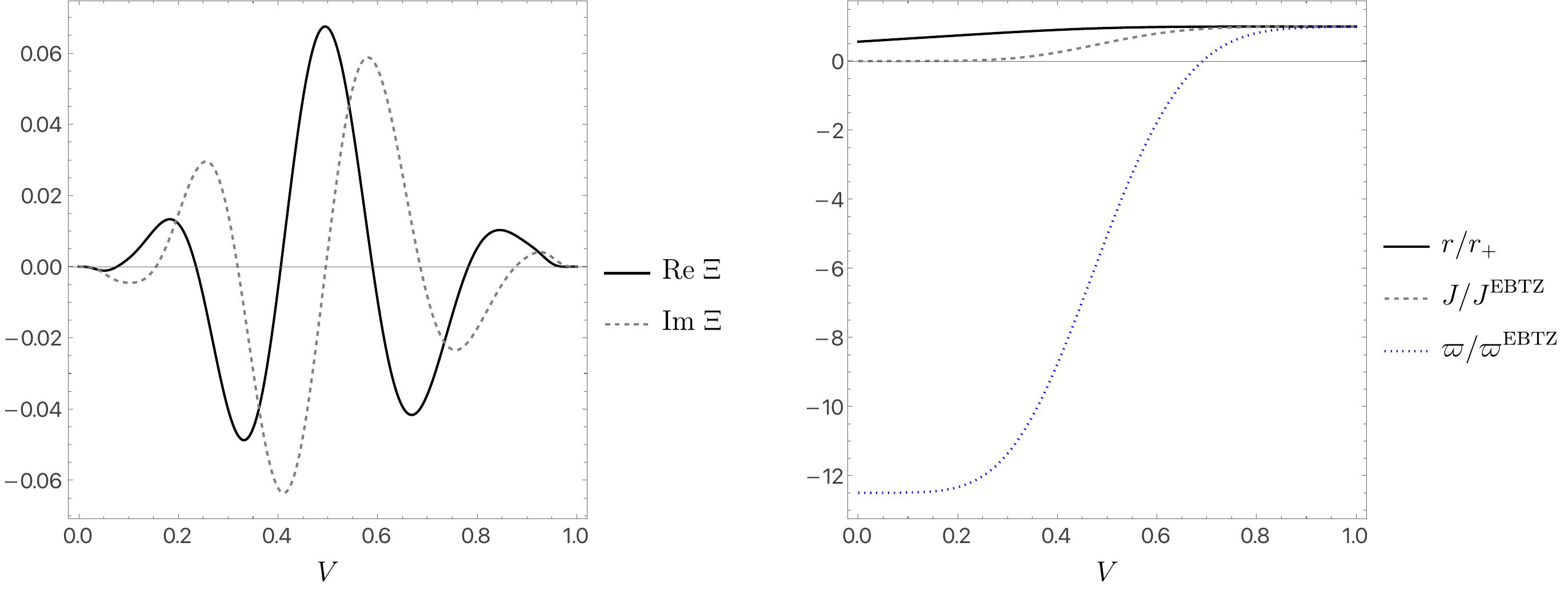}
    \caption{Left: $\operatorname{Re}\Xi$ and $\operatorname{Im}\Xi$ as functions of $V\in[0,1]$. Right: $r/r_+$, $J/J^{\rm EBTZ}$, and $\varpi/\varpi^{\rm EBTZ}$ as functions of $V\in[0,1]$. To generate these plots, we chose $m=4$ and $r_+/\ell=0.2$.}
    \label{fig:profilessmall}
\end{figure}

\begin{figure}[ht]
    \centering
    \includegraphics[width=\linewidth]{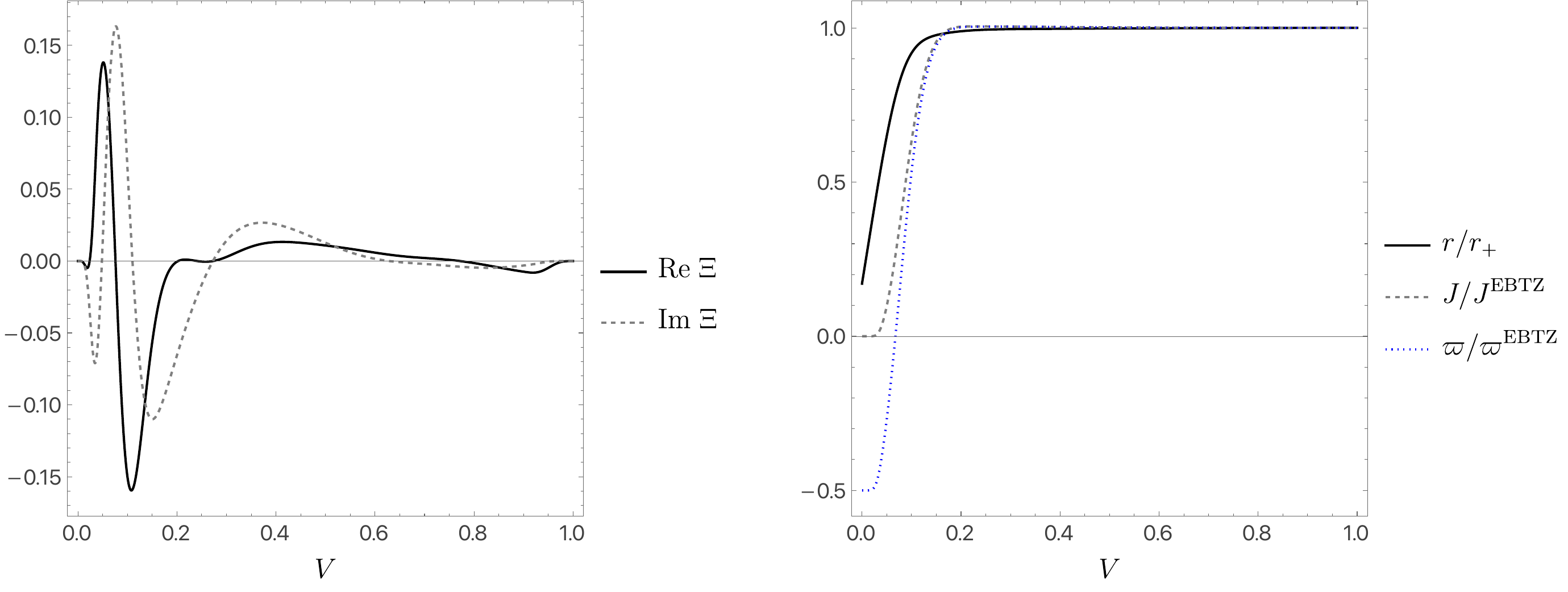}
    \caption{Left: $\operatorname{Re}\Xi$ and $\operatorname{Im}\Xi$ as functions of $V\in[0,1]$. Right: $r/r_+$, $J/J^{\rm EBTZ}$, and $\varpi/\varpi^{\rm EBTZ}$ as functions of $V\in[0,1]$. To generate these plots, we chose $m=5$ and $r_+/\ell=1$.}
    \label{fig:profilesmedium}
\end{figure}

\begin{figure}[ht]
    \centering
    \includegraphics[width=\linewidth]{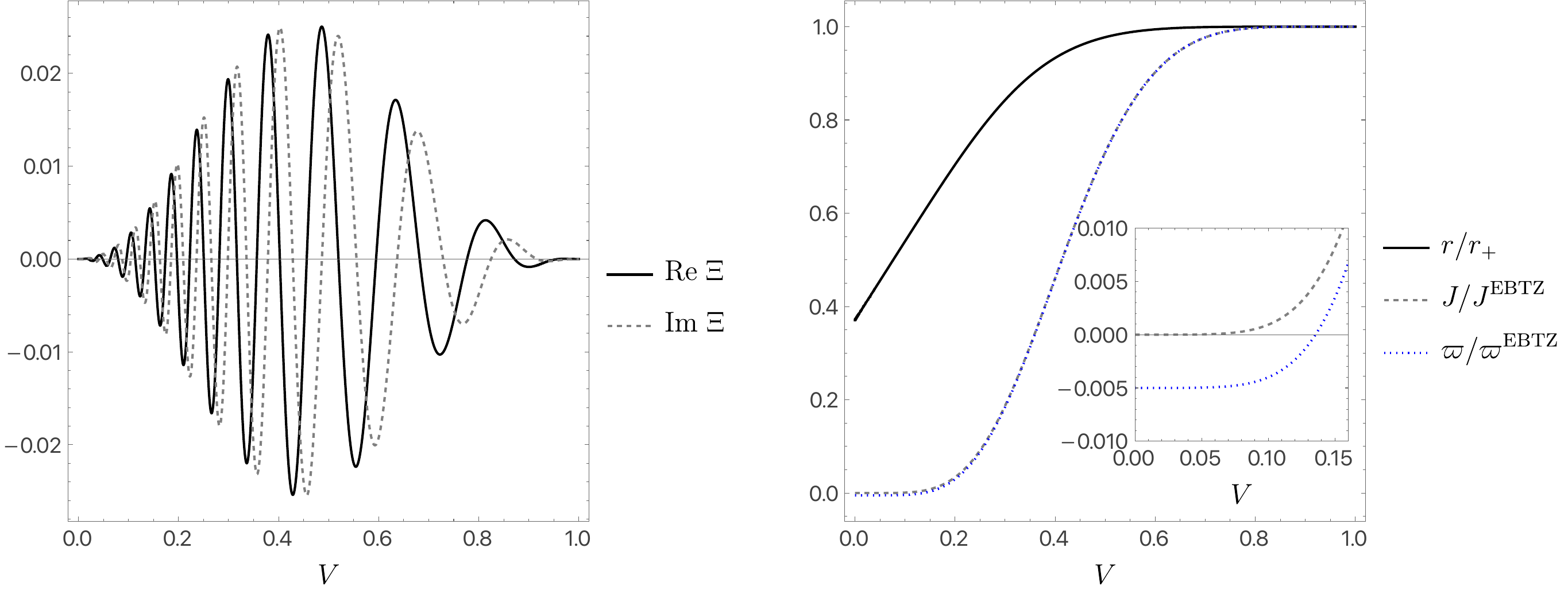}
    \caption{Left: $\operatorname{Re}\Xi$ and $\operatorname{Im}\Xi$ as functions of $V\in[0,1]$. Right: $r/r_+$, $J/J^{\rm EBTZ}$, and $\varpi/\varpi^{\rm EBTZ}$ as functions of $V\in[0,1]$. To generate these plots, we chose $m=400$ and $r_+/\ell=10$.}
    \label{fig:profileslarge}
\end{figure}

\subsection{Gluing to non-extremal BTZ}

Our Third Law guarantees that, for the theory \eqref{action}, a non-extremal BTZ black hole cannot evolve to an extremal BTZ black hole in finite time. In this section we will construct solutions in which an initial non-extremal BTZ black hole evolves in finite time to another exactly BTZ black hole that is still non-extremal but closer to extremality than the initial hole. We will investigate ``what goes wrong'' as we try to push the final black hole towards exact extremality. We will construct these solutions using gluing, as above, but replacing the initial AdS$_3$ region with a non-extremal BTZ solution. In the resulting solution, the early time part of the event horizon will be an outgoing null cone outside the apparent horizon of the initial BTZ black hole. The late time part of the event horizon coincides with the apparent horizon of the final BTZ black hole.
\begin{figure}
    \centering
    \includegraphics[width=0.7 \linewidth]{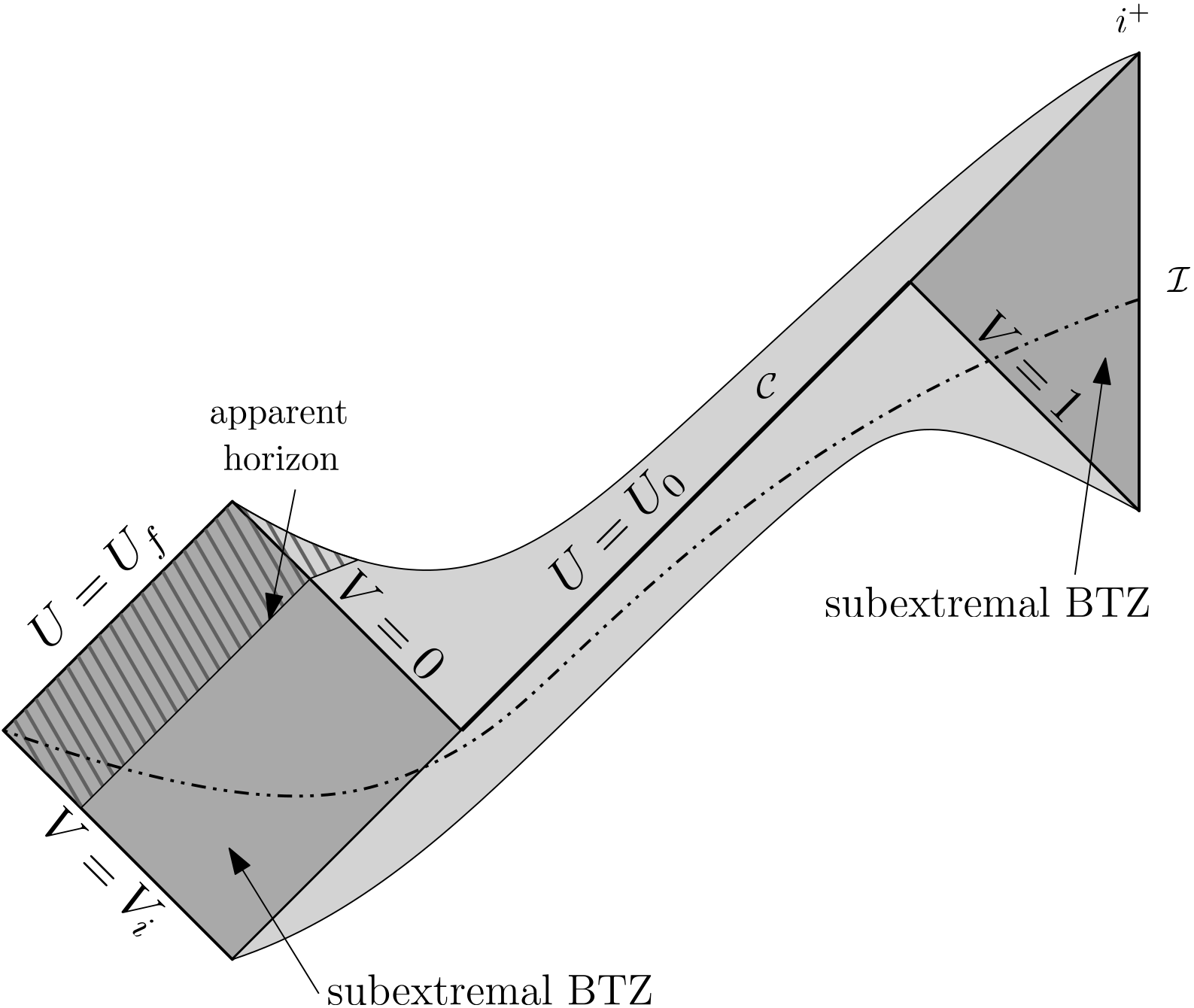}
    \caption{Spacetime in which an initial subextremal BTZ black hole evolves in finite time to a final black hole that is also exactly BTZ and closer to extremality, although still subextremal. The dash-dotted line shows an initial surface for which the induced initial data describes an exactly subextremal BTZ black hole surrounded by a region containing a non-vanishing scalar field, and then another exactly BTZ region near infinity. The hatched region shows the early time part of the region in which circles of axisymmetry are trapped; in the initial BTZ region the boundary of this region is the apparent horizon $r=r_+^-$.}
    \label{fig:subx to x}
\end{figure}
We will answer this question in the context of $C^0$ gluing. This may seem unphysical but previous work \cite{Gadioux:2025unn} has found that the results of $C^0$ gluing constructions are often qualitatively identical to results with higher regularity. Furthermore, any solution of $C^k$ gluing with $k>0$ is also a solution of $C^0$ gluing so if we can identify what goes wrong for $k=0$ then it is likely that this is also the obstruction for $k>0$. 

We denote the parameters of the initial BTZ solution as $r_+^-$ and $r_-^-$. The parameters of the final BTZ solution are denoted $r_+^+$ and $r_-^+$. The corresponding mass and angular momentum parameters, determined by \eqref{BTZMJ},  will be denoted by $\varpi^{\rm BTZ \pm}$ and $J^{\rm BTZ \pm}$. In the solutions that we construct, $r_+^-$ will be the radius of the {\it apparent} horizon of the initial black hole and $r_+^+$ the radius of both the event and apparent horizons of the final black hole. 

There are a few small changes compared to our previous gluing construction since we have now replaced the initial AdS$_3$ region with a BTZ solution and replaced the final extremal BTZ solution with a non-extremal BTZ solution. We fix $r_\pm^-$ and $r_+^+$ and solve for $r$ along $\mathcal{C}$ as before. We require that $r(U_0,0)>r_+^-$, so that $\mathcal{C}$ glues to a null cone outside the apparent horizon of the initial hole. Our initial condition for $J$ is $J(U_0,0)=J^{\rm BTZ-}$ so transporting $J$ to $V=1$ then determines $J^{\rm BTZ+}$ from which we can determine $r_-^+$. The only other difference is that we now determine the initial condition for $\partial_U r$ along $\mathcal{C}$ by demanding that $\varpi$ matches $\varpi^{{\rm BTZ}-}$ at $V=0$.

In this simpler setting of $C^0$ gluing we will use a simpler Ansatz than the one of Eq.~(\ref{eq:prof}), namely 
\begin{equation}
\Xi(U_0,V)= a_0 V(1-V) e^{{\rm i}\,m\log(\varepsilon+V)}.
\label{eq:prof-epsilon}
\end{equation}
where $\varepsilon>0$ is a new parameter. We will vary $a_0$ at fixed $(m,\varepsilon)$ and determine how closely the final black hole can approach extremality while maintaining
\begin{equation}
\left.\partial_U r(U,V)\right|_{U=U_0}<0,
\qquad
r(U_0,0)>r^-_+.
\end{equation}
For fixed $m\geq3$, as we tried to push the final black hole towards extremality by varying $a_0$, the first obstruction encountered was invariably the failure of the first of these conditions whereas for $m\leq2$, we found that the obstruction is sometimes that $r$ becomes negative as $V$ decreases, so the second condition cannot be satisfied for any choice of $r_+^-$.

Fig.~\ref{fig:com} shows results for the case $r_+^+=\ell$, $r^-_+=\ell/3$, and $r^-_-=r^-_+/2$, corresponding to an initial black hole with $J^{{\rm BTZ}-}/\ell =0.8\,\varpi^{\rm BTZ-}$. We define $J^{\rm EBTZ} = \varpi^{\rm BTZ+}\ell$: the angular momentum that the final black hole would have if it were extremal.
In the left panel of Fig.~\ref{fig:com}, we compare the largest value of $J^{\rm BTZ+}/J^{\rm EBTZ}$ attained as a function of $m$ for $\varepsilon=1$ (black disks) and $\varepsilon=10^{-1}$ (grey squares). As expected, in neither case is extremality achieved at any finite value of $m$. However, as $m \rightarrow \infty$, $J^{\rm BTZ+}/J^{\rm EBTZ}$ appears to approach a limiting value at large $m$ in both cases. For $\varepsilon=1$ this limiting value appears to be $1$. The right panel of Fig.~\ref{fig:com} shows $1-J/J^{\rm EBTZ}$ on a $\log$-$\log$ scale, making it clear that larger values of $\varepsilon$ allow us to get increasingly close to extremality at large $m$.

\begin{figure}[ht]
    \centering
    \includegraphics[width=\linewidth]{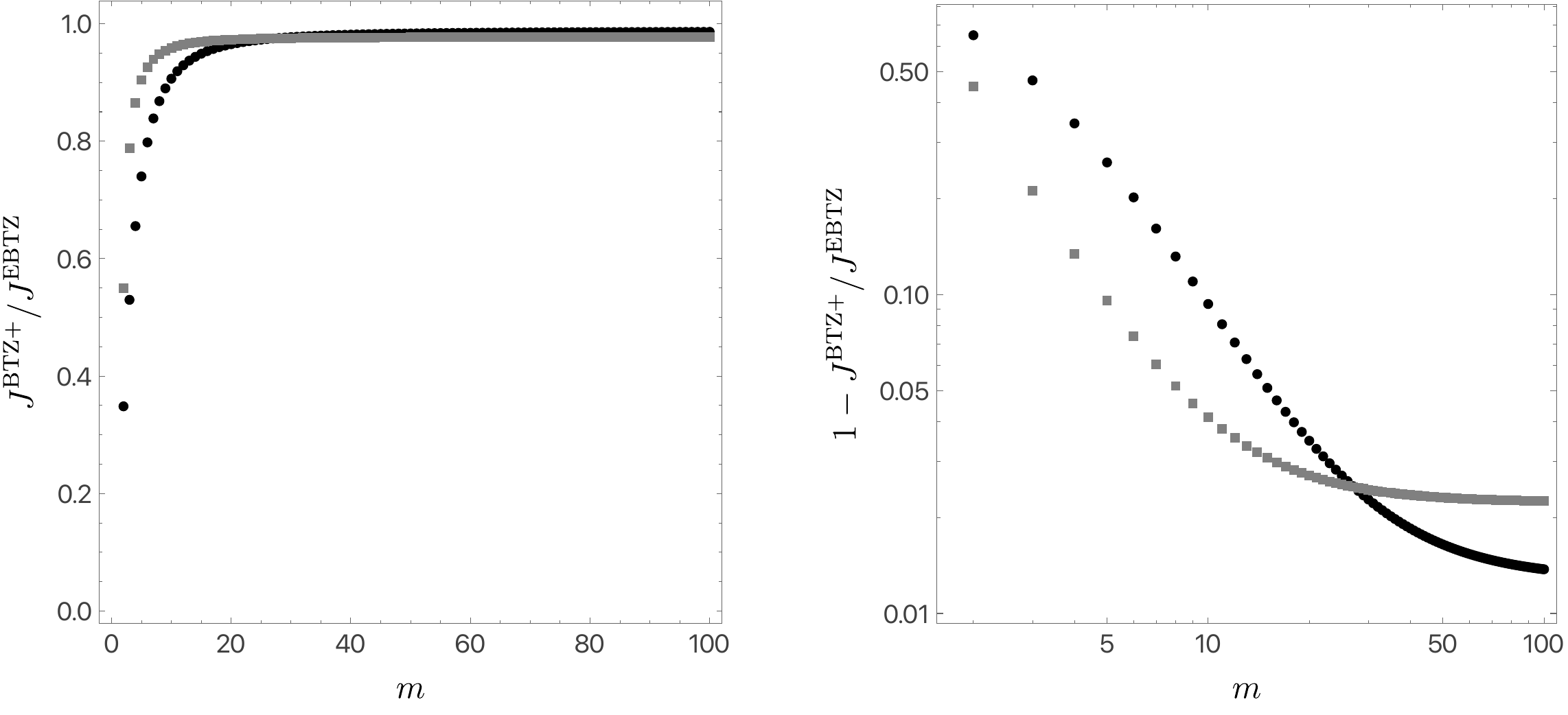}
    \caption{Comparison of the maximal angular momentum ratio $J^{\rm BTZ+}/J^{\rm EBTZ}$ as a function of $m$ for different scalar-field profiles. \textbf{Left panel:} The largest value of $J^{\rm BTZ+}/J^{\rm EBTZ}$ attained for $\varepsilon=1$ (black disks) and $\varepsilon=10^{-1}$ (grey squares). The data clearly approaches a limiting value at large $m$. \textbf{Right panel:} The quantity $1-J^{\rm BTZ+}/J^{\rm EBTZ}$ plotted on a $\log$-$\log$ scale, demonstrating that larger values of $\varepsilon$ allow the system to approach $J^{\rm BTZ+}/J^{\rm EBTZ}=1$ more closely without fully saturating. Both panels were generated using the fixed parameters $r_+^+=\ell$, $r^-_+=\ell/3$, and $r^-_-=r^-_+/2$.}
    \label{fig:com}
\end{figure}

We have constructed similar solutions for a range of initial black hole parameters. In all cases the result is the same: it is never possible to achieve exact extremality at finite $m$, in agreement with our Third Law, but one can construct sequences of solutions for which $J^{\rm BTZ+}/J^{\rm EBTZ}$ appears to tend to $1$ as $m \rightarrow \infty$. Thus we conclude that there exist solutions for which an initial non-extremal BTZ black hole can be glued to a final BTZ black hole for which $J^{\rm BTZ+}/J^{\rm EBTZ}$ can be arbitrarily close to $1$, but not exactly equal to $1$. Stated more physically, there exist solutions in which a scalar field falls into a non-extremal BTZ black hole to produce, in finite time, an exactly BTZ black hole that is also non-extremal but can be arbitrarily close to extremality. 

We have also considered the limiting case in which the initial BTZ solution has vanishing size, namely $r^-_\pm=0$. This BTZ solution has $\varpi=J=0$. It does not describe a black hole, and is singular at $r=0$, but it is of special interest because it is supersymmetric \cite{Coussaert:1993jp} and can be viewed as the Ramond sector ground state. 
The Penrose diagram for solutions produced from the gluing construction is the same as Fig.~\ref{fig:subx to x} except that no apparent horizon is present in the region of the initial BTZ solution. Our results for this case are qualitatively similar to the results of Fig.~\ref{fig:com}. In particular, we were unable to glue the $\varpi=J=0$ BTZ solution (the ``Ramond sector ground state'') to a final extremal BTZ black hole. So even though Theorem \ref{thm:3rdlaw} does not apply to this situation, the spirit of the Third Law still appears to hold here.

\subsection{Asymptotically extremal solutions}

So far, we have investigated the gluing construction of solutions that describe formation of an extremal BTZ black hole in finite time. In this section we will investigate solutions which ``settle down'' to extremal BTZ along the event horizon only asymptotically, in infinite time. There are two motivations for doing this. First, our Third Law (Theorem \ref{thm:3rdlaw}) applies only to extremal black holes formed in finite time. Could we construct a solution that initially has a trapped circle and settles down to extremal BTZ in infinite time? This might be regarded as an ``asymptotic violation'' of the Third Law. Second, Theorem \ref{thm:3rdlaw} implies that there are no trapped circles in our solutions describing gravitational collapse to form an extremal BTZ black hole in finite time. Hence these solutions might be critical solutions. However, the fact that they are exactly extremal BTZ after a finite time, rather than settling down asymptotically to extremal BTZ, suggests that they are non-generic as critical solutions. Can we find evidence for the existence of critical solutions that describe gravitational collapse to form an extremal BTZ black hole asymptotically?

We will investigate these questions using similar methodology to the gluing construction. We assume that, at late time, the behaviour of the scalar field along the event horizon will be the same as for a test field in the extremal BTZ spacetime \cite{Gralla:2019isj}. We can then use the transport equations to determine the behaviour of other quantities along the event horizon. From this we see that the quasilocal quantities on the horizon violate the R-sector BPS bound of Theorem \ref{thm:bps}, which implies that no trapped circle can be present initially. Thus we cannot violate the Third Law asymptotically using this method. Nevertheless, we can still glue an early time exactly AdS$_3$ region and hence construct a black hole interior with a regular centre and no trapped surfaces. Hence our results are consistent with existence of solutions that describe gravitational collapse to form an extremal BTZ black hole asymptotically. A limitation of this method is that we do not construct the black hole exterior, we must simply assume that there exists a solution outside the horizon that agrees with the behaviour we assume on the horizon (and with the boundary conditions at infinity). Hence we cannot claim definitively that this type of solution exists. Nevertheless, we think our results are evidence that there do exist solutions of the type described. 

We now describe these solutions in some detail. In this section, we will restrict our attention to the $C^0$ gluing. We begin by recalling the solution for a test massless scalar field on an extremal BTZ black hole with horizon radius $r_+$, obtained in \cite{Gralla:2019isj}. In double-null coordinates $(U,V,\varphi)$, it takes the form\footnote{
In the gauge $\Omega=1$, $V$ is an affine parameter. In extremal BTZ, the Killing time along the horizon is also an affine parameter. So if the black hole is settling down to extremal BTZ we expect the decay in $V$ to be the same as the decay w.r.t. Killing time determined in \cite{Gralla:2019isj}.}
\begin{subequations}
\begin{equation}
\Xi(U,V)\propto V^{-\alpha_-}\left(1-\frac{2r_+^2}{\ell^2}UV\right)^{-\alpha_+}\,,
\end{equation}
where
\begin{equation}
\alpha_{\pm}=1\pm\frac{1}{2}\frac{{\rm i}\ell m}{r_+}\,.
\end{equation}
\end{subequations}
On the future event horizon, where $U=0$ and $V>0$, one has
\begin{equation}
\Xi(0,V)\propto V^{-\alpha_-}\,.
\label{eq:lin}
\end{equation}
We shall therefore assume the same asymptotic decay in our nonlinear asymptotic gluing construction along the gluing surface $\mathcal{C}$ as $V\to+\infty$. For later convenience, we take
\begin{equation}
\mathcal{C}=\{V\geq1:U=0\}\,.
\end{equation}
As a first consistency check, we solve the $V$-Raychaudhuri equation, Eq \eqref{eq:1a}, at large $V$, assuming that, after factoring out the leading behaviour $V^{-\alpha_-}$, $\Xi$ admits an asymptotic expansion in inverse powers of $V$,
\begin{equation}
\Xi(0,V)=V^{-\alpha_-}\sum_{i=0}^{+\infty}\Xi^{(i)}V^{-i}\,.
\label{eq:Xiinf}
\end{equation}
We then find
\begin{equation}
r(0,V)=\sum_{i=0}^{+\infty}r^{(i)}V^{-i}\,,
\end{equation}
with $r^{(0)}=r_+$, $r^{(1)}=0$, and
\begin{equation}
r^{(2)}=-\frac{\ell^2m^2+4r_+^2}{12r_+}|\Xi^{(0)}|^2\,.
\end{equation}
Similarly, the angular momentum can be obtained from Eq \eqref{eq:j}, which admits the expansion
\begin{equation}
J(0,V)=\sum_{i=0}^{+\infty}J^{(i)}\,V^{-i}\,,
\end{equation}
where $J^{(0)}$ remains undetermined, $J^{(1)}=0$, and
\begin{equation}
J^{(2)}=-\ell m^2|\Xi^{(0)}|^2\,.
\end{equation}
Next, we turn our attention to the transport equation for $\partial_U r$, Eq \eqref{eq:dur}, along $\mathcal{C}$. Requiring $\partial_U r$ to remain finite as $V\to\infty$ imposes
\begin{equation}
|J^{(0)}|=\frac{2r_+^2}{\ell^2}\,,
\end{equation}
and hence requires the asymptotic black hole to be extremal. This is perhaps not surprising, since the linear expansion (\ref{eq:lin}) was obtained about a fixed EBTZ background. Indeed, we find
\begin{equation}
\left.\partial_U r(U,V)\right|_{U=0}=\sum_{i=0}^{+\infty}Z^{(i)}V^{-i}\,,
\end{equation}
where $Z^{(0)}$ remains undetermined. This is because there is residual gauge freedom arising from the freedom to rescale $U \rightarrow \alpha U$, $V-1 \rightarrow \alpha^{-1}(V-1)$ with $\alpha>0$.\footnote{This freedom was not present in our previous gluing construction because we imposed gluing conditions at both $V=0$ and at $V=1$.} Only the sign of $Z^{(0)}$ is physically meaningful. We assume that $Z^{(0)}<0$ to ensure that no antitrapped surfaces are present at large $V$. We find the solution for $Z^{(1)}$ is
\begin{equation}
Z^{(1)}=-\frac{\ell^2m^2+4r_+^2}{6 \ell^2r_+}|\Xi^{(0)}|^2\,.
\end{equation}
Asymptotically as $V\to+\infty$, these produce
\begin{equation}
\varpi-\frac{|J|}{\ell}=\frac{2}{3\,r_+}(4r_+^2+\ell^2m^2) \frac{Z^{(0)}|\Xi^{(0)}|^2}{V^3}+\mathcal{O}(V^{-4})\,.
\end{equation}
From this we see that $\varpi<|J|/\ell$ at large $V$, so a late-time cross-section $S$ of $\mathcal{C}$ violates the R-sector BPS bound. However, if a trapped surface $T$ were present inside the black hole then this BPS bound {\it would} hold on $S$, by Theorem \ref{thm:bps}.\footnote{
We can take $S$ to be at arbitrarily late time to ensure that $S$ is spacelike separated from $T$ as required in Theorem \ref{thm:bps}.} Hence there cannot be a trapped surface present. So we cannot violate the Third Law asymptotically with the above setup. 

We now turn to constructing evidence for solutions with the above asymptotics that arise from gravitational collapse. 
In order to $C^0$ glue to an exactly AdS$_3$ region at $V=1$ on $\mathcal{C}$, we need the scalar field to vanish there. One of the simplest profiles that does the trick is the one-parameter profile
\begin{equation}
\label{asymp_ansatz}
\Xi(0,V)=\Xi^{(0)}\left(1-\frac{1}{V}\right)V^{-\alpha_-}\,,
\end{equation}
which sets $\Xi^{(i)}=0$ for $i\geq2$ and $\Xi^{(1)}=-\Xi^{(0)}$ in Eq.~(\ref{eq:Xiinf}).

For fixed $m$ and $r_+/\ell$, we determine the solution by a shooting procedure in $\Xi^{(0)}$. For each trial value of $\Xi^{(0)}$, we first solve the Raychaudhuri equation, Eq. \eqref{eq:1a}, backwards from large $V$ towards $V=1$, imposing the asymptotic conditions $\{r(0,V),\partial_V r(0,V)\}\to\{r_+,0\}$ as $V\to+\infty$. This uniquely determines $r$ along $\mathcal{C}$. We then impose $J(0,1)=0$, as required for the gluing to an exactly AdS$_3$ region, and integrate Eq \eqref{eq:j} forwards from $V=1$ to large $V$. The resulting asymptotic value of $J$ is used as the shooting condition: we vary $\Xi^{(0)}$ until $\lim_{V\to+\infty}J(0,V)=J^{\rm EBTZ}$ (defined in \eqref{eq:params at extremality}). Once this condition is satisfied, we fix $\Xi^{(0)}$ and integrate Eq \eqref{eq:dur} forwards from $V=1$, imposing $\varpi(0,1)=\varpi^{{\rm AdS}_3}$. A solution is accepted as physical only if $\left.\partial_U r(U,V)\right|_{U=0}<0$ throughout $\mathcal{C}$.

Within this ansatz, and for fixed $m$, our numerical search yielded a physical $C^0$ gluing-namely, one satisfying $r(0,V)>0$ and $\left.\partial_U r(U,V)\right|_{U=0}<0$ throughout $\mathcal{C}$—only when $r_+$ was sufficiently small. In Fig.~\ref{fig:values}, we show the largest value of $r_+/\ell$ for which such a gluing was found, as a function of $m$. For every value of $m$ considered, the condition $\left.\partial_U r(U,V)\right|_{U=0}<0$ was the first to fail as $r_+/\ell$ was increased. Moreover, $\left.\partial_U r(U,V)\right|_{U=0}$ attained its maximum asymptotically as $V\to+\infty$, so that the threshold shown in Fig.~\ref{fig:values} is characterised by $\lim_{V\to+\infty}\left.\partial_U r(U,V)\right|_{U=0}=0^-$.
\begin{figure}[ht]
    \centering
    \includegraphics[width=0.75\linewidth]{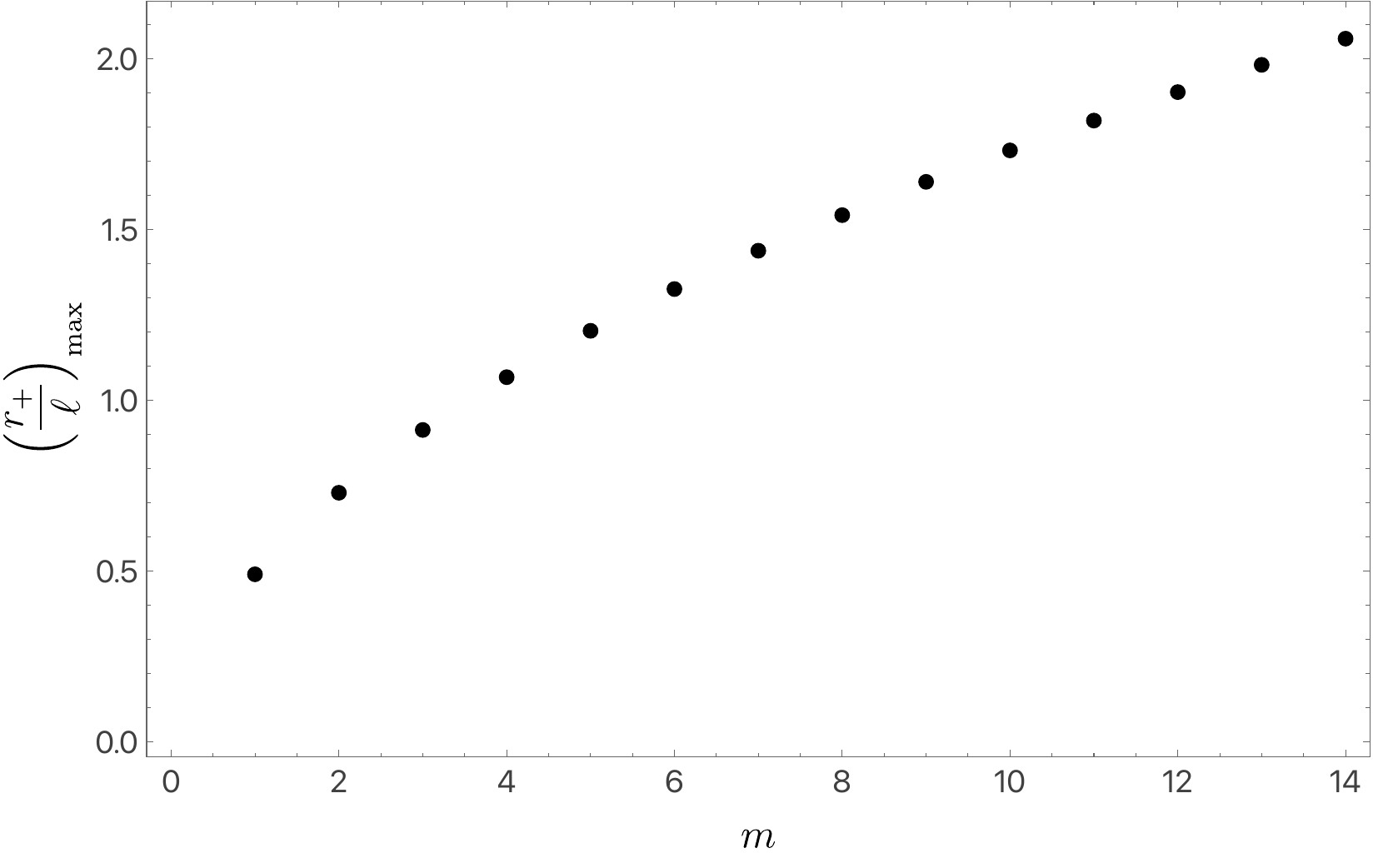}
    \caption{Maximum value of $r_+/\ell$ for which we find a physical $C^0$ gluing to AdS$_3$ with the Ansatz \eqref{asymp_ansatz}, as a function of $m$. For larger values of $r_+/\ell$, the condition $\left.\partial_U r(U,V)\right|_{U=0}<0$ is violated at large $V$.}
    \label{fig:values}
\end{figure}
In Fig.~\ref{fig:example}, we show the properties of an example solution as a function of $1/V$. 

\begin{figure}[ht]
    \centering
    \includegraphics[width=0.75\linewidth]{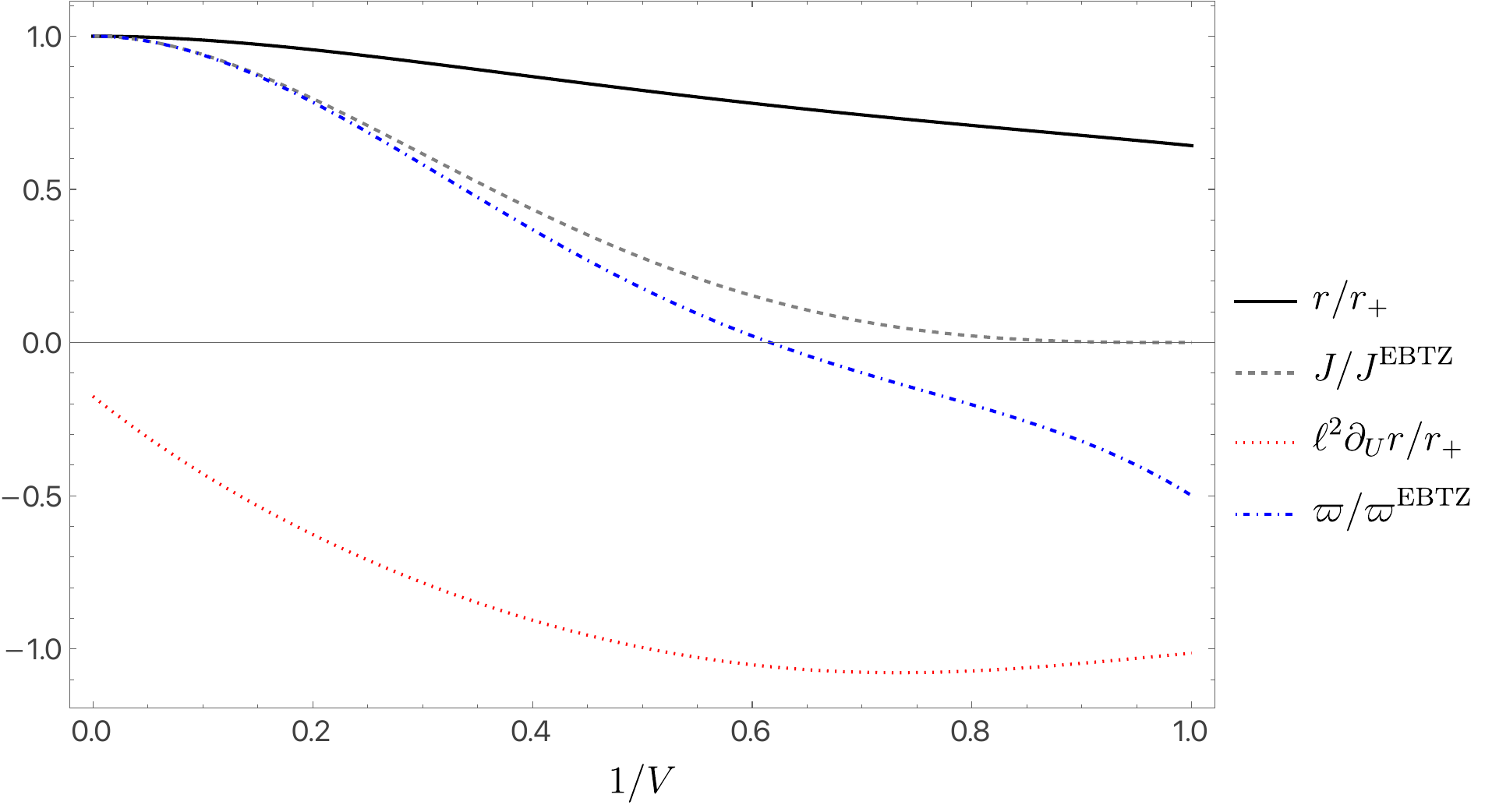}
    \caption{Properties of a physical $C^0$ gluing with $r_+=\ell$ and $m=4$, shown as a function of $1/V\in[0,1]$, so the left edge of the plot corresponds to $V=\infty$ and the right edge to where the solution is glued to AdS$_3$. We show $r/r_+$, $J/J^{\rm EBTZ}$, $\ell^2\partial_U r/r_+$, and $\varpi/\varpi^{\rm EBTZ}$, with the corresponding line styles and colors indicated in the legend on the right. The shooting parameter is $\Xi^{(0)}\approx0.92952657$, for which $\lim_{V\to+\infty}J(0,V)=J^{\rm EBTZ}$.}
    \label{fig:example}
\end{figure}

\section{Discussion}

\label{sec:discussion}

\subsection{Summary}

In this paper we have proved a Third Law of black hole mechanics, Theorem \ref{thm:3rdlaw}, for gravity in $2+1$ dimensions. If a black hole contains a trapped circle at some instant of time then it cannot evolve in finite time to a solution that coincides with an extremal BTZ black hole near the horizon. To prove this theorem we introduced spinorial definitions of quasilocal energy and angular momentum for gravity in $2+1$ dimensions and established that these definitions satisfy a BPS inequality under suitable circumstances. 

These results assumed that matter respects the dominant energy condition. However, supergravity theories may contain ``tachyonic'' scalar fields that violate this condition but respect the Breitenl\"oher-Freedman bound \cite{Breitenlohner:1982jf}. We expect that our results could be adapted to such theories by adjusting our definitions of quasilocal quantities so that the supercovariant derivative appearing in these definitions depends on the scalar fields in the way prescribed by the supersymmetry of the theory. 

We have used the method of characteristic gluing to demonstrate the existence of solutions describing gravitational collapse of a massless scalar field to form an extremal BTZ black hole in finite time. As discussed in the Introduction, this implies that there is a qualitative difference between $2+1$ dimensions and higher dimensions. In higher dimensions, when a Third Law holds for a certain type of extremal black hole then it is also impossible to form that black hole in finite time in gravitational collapse, and conversely. But $2+1$ dimensions is special in that a Third Law holds but, nevertheless, an extremal BTZ black hole can still form in finite time in gravitational collapse. The reason for this difference can be traced to the fact that the supersymmetry of extremal BTZ arises from a periodic (Ramond-sector) spinor whereas in a spacetime describing gravitational collapse, any globally defined spinor must be antiperiodic (Neveu-Schwarz sector). The fact that extremal BTZ is supersymmetric in the Ramond sector is the reason a Third Law holds and the fact that it is not supersymmetric in the Neveu-Schwarz sector is the reason it can form in gravitational collapse.

These solutions are candidates for critical solutions. However, a generic critical solution would only be expected to settle down to extremal BTZ asymptotically, not in finite time. We have presented evidence for the existence of such solutions, assuming that the decay of the scalar field along the event horizon is the same as that of a test field in the extremal BTZ spacetime. With the same assumption we have also ruled out ``asymptotic violations'' of the Third Law.

\subsection{Dual CFT interpretation}

For appropriate choices of bulk matter, our results should imply corresponding results for a dual CFT. Note that our results do not assume anything about boundary conditions at asymptotic infinity, other than that the BTZ solution satisfies these boundary conditions. What would be the CFT interpretation of our Third Law? A direct translation into CFT language appears impossible e.g. it is not known what CFT property corresponds to the existence of a trapped surface in the bulk. The property of ``being extremal at the horizon after a finite time'' is similarly hard to interpret in CFT language. However, we can perhaps look for a statement that is weaker than our Third Law that is easier to interpret in CFT. 

Consider, for example, the ``thermofield double'' state, defined on two copies of the CFT, for some non-zero temperature. This corresponds to the 2-sided BTZ solution in the bulk \cite{Maldacena:2001kr}. Assume that our doubled CFT is initially in this state. There are strictly trapped circles present in the bulk. Now try to turn on CFT sources to drive one copy of the CFT into a state that saturates the R-sector BPS bound on one of the boundaries in finite time, in such a way that the state continues to admit a geometric bulk description. Our Third Law (or just the BPS bound, applied to a surface $\Sigma$ extending from a trapped circle to infinity) indicates that this is impossible, no matter how much fine tuning one performs. 

On the other hand, we have constructed solutions describing gravitational collapse to form an exactly extremal BTZ black hole in finite time. Let's assume that the initial state for such a solution is obtained from the vacuum state by turning on sources in one copy of the CFT. The existence of these solutions shows that one can fine-tune the sources to drive the vacuum state into a final state that saturates the R-sector BPS bound after a finite time. But in this case the CFT must be defined in the NS sector, and the NS-sector BPS bound is not saturated.

\subsection{The other Third Law}

The Third Law of black hole mechanics is motivated by the ``unattainability of zero temperature'' version of the Third Law of thermodynamics. Another version of the latter asserts that the entropy of a system should vanish at zero temperature. The black hole analogue of this version has attracted considerable attention in recent years. Calculations based on the gravitational path integral have shown that quantum gravity corrections to the entropy become important at very low temperature \cite{Ghosh:2019rcj}, which may enforce this version of the Third Law. This raises the question of whether our solutions describing gravitational collapse to extremal BTZ would exhibit large quantum gravity effects.\footnote{The following discussion is motivated by conversations with Mihalis Dafermos, Roberto Emparan, Geoff Penington and Joaquin Turiaci.} This would be very exciting because it would be an example of classical evolution leading to a regime where quantum gravity becomes important, but without the classical theory showing any sign of breaking down. A natural question is whether quantum gravity effects become important as soon as the extremal BTZ black hole forms. 

We believe the answer to this question is no. The calculations just described apply to a state of thermal equilibrium, described by an eternal black hole which, at low temperature, has a long AdS$_2$ ``throat'' region. At zero temperature, the throat is infinite. The large quantum gravity effects are associated with fluctuations of this long, or infinite, throat. However, this long throat region is not part of the spacetime describing an extremal black hole formed in gravitational collapse. So we don't expect that quantum gravity effects will be large as soon as an extremal black hole has formed. However, they may gradually build up over time. For example, consider a foliation of the extremal BTZ region by surfaces of constant BTZ time coordinate $t$. Such surfaces develop a long throat region asymptotically at large $t$. This may indicate that quantum gravity effects become important at late time. It would be very interesting to study this further and to determine the time scale over which this effect takes place.

\subsection{Charged BTZ black holes}\label{sec:charged btz}

One can ask whether results analogous to ours hold for {\it charged} black hole solutions of general AdS$_3$ supergravities. A historically important example is the case of $(p,q)$-supergravities \cite{achucarro1986chern}, whose actions contain Chern-Simons terms for gauge groups $SO(p)$ and $SO(q)$ (see \cite{Bac:2026eqj} for a clear review). Charged black hole solutions of these theories have exactly the same metric as BTZ (equation \eqref{eq:btz metric}), labelled by $r_\pm$, but the mass and angular momentum now gain a contribution from the charges of these gauge fields. The supercovariant derivatives are modified so that they are gauge covariant. We could run through a similar story to the uncharged case: by imposing a certain condition on spinors of boost-weight $b=-1/2$ on some spacelike circle, two operators defined on that circle naturally arise.
To evaluate the eigenvalues of these on a given circle, one also needs to define a quasi-local charge associated with each gauge field.
A natural definition is that the charges are the eigenvalues of the holonomies of each connection about the circle.
For the case of a $U(1)$ gauge field, this amounts to the integral of the gauge field around the circle (up to normalisation).
In axisymmetry, these reduce to equation (2.28) of \cite{Bac:2026eqj}.
These operators and expressions for their eigenvalues in axisymmetry are presented in Appendix \ref{app:charged quantities} for the case where ${\rm SO}(p)={\rm SO}(q)={\rm U}(1)$.

We now consider whether such charged BTZ black holes satisfy the Third Law and whether they can form in collapse.
A new feature is that the Killing spinors of extremal, charged BTZ black holes can be either periodic or anti-periodic depending on the value of the charge.
In particular, the charge associated to each gauge field satisfies $Q^{\pm}\in \mathbb{Z}+\delta^{\rm R/NS}/2$ (for one of $\pm$) \cite{Banados:2015tft}.
When $Q^\pm$ is half-integer the black holes are supersymmetric in the Neveu-Schwarz sector, while when $Q^\pm$ is integer they are supersymmetric in the Ramond sector.
Restricting to uncharged matter satisfying the dominant energy condition, we can once again prove a Third Law analogous to \ref{thm:3rdlaw}. 

With uncharged matter one cannot form a charged black hole in gravitational collapse. Proving a Third Law with charged matter would require imposing some local bound on the electric current of matter in terms of its energy-momentum tensor, dictated by the supersymmetry of the theory, as was done for 4d theories in Refs \cite{Reall:2024njy, McSharry:2025iuz}. By carrying out a similar analysis, we expect that those charged extremal BTZ black holes which are supersymmetric in the Neveu-Schwarz sector cannot form in collapse of matter satisfying this local bound, whereas those supersymmetric in the Ramond sector could well do so.
Moreover, we expect that the Third Law would hold in both settings.

\acknowledgments

HSR and JES are supported by STFC grant No. ST/X000664/1. JES is also partially
supported by Hughes Hall College. AMM is supported by an STFC studentship and thanks C. B\"ar, A. Bac and V. Rallabhandi for helpful discussions. 

\appendix

\section{Alternative quasilocal quantities}

\label{sec:alternate_defs}

The quasi-local mass and angular momentum of section \ref{sec:quasilocal quantities} were defined on closed, co-dimension two submanifolds with negative ingoing expansion, and no restriction on the outgoing expansion.
On a surface $S$ with $\theta_{\rm out}>0$ and unrestricted ingoing expansion, we can formulate alternative definitions of these quantities in terms of an analogous operator.
Recall that the operators ${\cal O}^\pm$ were obtained by imposing equation \eqref{hol1} on $\psi_+$ in equation \eqref{ICexpr} and then integrating by parts.
Alternatively, we could impose
\begin{equation}
\label{antihol}
    \frac{1}{2}\left(1-\epsilon\right)\nabla_1^\pm \psi \equiv \Ds\psi_+ - \frac{1}{2} \theta_{\rm out} \psi_- \mp \frac{i}{2\ell}\Gamma^2\psi_+ = 0.
\end{equation}
Under the assumption that $\theta_{\rm out}\neq0$ on $S$, this uniquely determines $\psi_-$ in terms of $\psi_+$ on $S$.
As before, we can define a boost-invariant inner product on spinor fields with $b=+1/2$ as
\begin{equation}\label{eq:alternative boost-inv ip}
    (\eta,\phi)\equiv \int_S\frac{2}{\theta_{\rm out}}\eta^\dagger \phi.
\end{equation}
Then, we can write
\begin{equation}
    I_S^{\pm}[\psi] = \left(\psi_+,{\cal A}^\pm \psi_+\right),
\end{equation}
where
\begin{equation}
    {\cal A}^{\pm}  = \left({\cal D} \pm \frac{i}{2\ell}\Gamma^1\right)^\ddagger \left({\cal D}\pm\frac{i}{2\ell}\Gamma^1\right) + \frac{\theta_{\rm in}\theta_{\rm out}}{4}.
\end{equation}
As before, $\ddagger$ represents the formal adjoint in \eqref{eq:alternative boost-inv ip}:
\begin{equation}
    \left({\cal D}\pm \frac{i}{2\ell}\Gamma^1\right)^\ddagger \equiv -{\cal D} + \frac{1}{\theta_{\rm out}}{\cal D}\theta_{\rm out} \pm \frac{i}{2\ell} \Gamma^1.
\end{equation}
These operators share many of the same properties as their cousins ${\cal O}^\pm$: they have real eigenvalues, commute with $\epsilon$ and are elliptic of Laplace-type.
Similarly to ${\cal O}^\pm$, we place special attention on the lowest eigenvalues of these operators, which we denoted by
\begin{equation}
    \begin{split}
        {\cal A}^\pm \psi_+ &= \alpha^{\rm R}_\pm \psi_+ \qquad \qquad  \psi_+ \; {\rm periodic} \nonumber \\
        {\cal A}^\pm \psi_+ &= \alpha^{\rm NS}_\pm \psi_+  \qquad \qquad \psi_+ \; {\rm antiperiodic}
    \end{split}
\end{equation}
and we define the non-extremalities associated with each sector as
\begin{equation}
    \tilde\Delta^{\rm R}_\pm = r^2 \alpha^{\rm R}_\pm, \qquad \qquad  \tilde\Delta^{\rm NS}_\pm = r^2 \alpha^{\rm NS}_\pm.
\end{equation}
We define the quasi-local energy and angular momentum as
\begin{equation}
    e^{\rm R/NS} \equiv \frac{1}{4G}\left(\delta_+^{\rm R/NS}+\delta_-^{\rm R/NS}\right)- \frac{\delta^{\rm R/NS}}{8G},\qquad j^{\rm R/NS}\equiv \frac{\ell}{4G}\left(\delta_+^{\rm R/NS}-\delta_-^{\rm R/NS}\right)\,.
\end{equation}
Under the same topological assumptions on $S$ as in the statement of theorem \ref{thm:bps} (i.e. compatibility of spin structures), but replacing the convexity condition on $S$ with $\theta_{\rm out}>0$, one can prove analogous BPS bounds on these quasi-local quantities, with the same associated rigidity\footnote{The only change to the proof is the boundary conditions one imposes on the auxiliary Witten spinor encountered in section \ref{sec:proof of bps}, which become more akin to those imposed on anti-trapped surfaces after considering the inhomogeneous problem with homogeneous boundary conditions.}.

\section{Quasilocal operators in the presence of charge}\label{app:charged quantities}
As a demonstration of the discussion in section \ref{sec:charged btz}, we define a quasi-local energy, angular momentum and charge in the presence of two, flat, ${\rm U}(1)$ gauge fields.

We replace equation \eqref{supercovdef} with
\begin{equation}
    \nabla^\pm_\mu \equiv D_\mu^\pm \pm \frac{i}{2\ell}\Gamma_\mu,
\end{equation}
where $D^\pm_\mu$ are the ${\rm U}(1)$-gauge-covariant derivatives of the $+(-)$ sector.
As before, we define
\begin{equation}
    \begin{split}
            I_S^\pm[\psi] &\equiv \int_{S} \psi^\dagger \Gamma^{2} \Gamma^1  \nabla^\pm_1 \psi\\
            &= \int_{S} \left[  \psi_+^\dagger\left(  \slashed{\cal D}^\pm \psi_-  + \frac{1}{2} \theta_{\rm in}  \psi_+ \mp \frac{i}{2\ell}\Gamma^{2} \psi_-\right)+ \psi_-^\dagger \left(  \slashed{\cal D}^\pm \psi_+  - \frac{1}{2} \theta_{\rm out}   \psi_-  \mp \frac{ i}{2\ell}\Gamma^{2} \psi_+\right)\right].
    \end{split}
\end{equation}
where we have defined a gauge-and-boost covariant operator
\begin{equation}
    {\cal D}^\pm \equiv {\cal D} -iq{\cal A}^\pm, \qquad \Ds^\pm \equiv \Gamma^2\Gamma^1 {\cal D}^\pm
\end{equation}
where ${\cal A}^\pm\equiv (e_1)^a A^\pm_a$, where $q$ is the charge of the object under consideration (measured in units of the charge of the gravitini).

Imposing that $\psi$ satisfies the equation in the first bracket, then rewriting the above, we are left with the operator
\begin{equation}
    {\cal O}^{\pm}  = \left({\cal D}^\pm \pm \frac{i}{2\ell}\Gamma^1\right)^\ddagger \left({\cal D}^\pm\pm\frac{i}{2\ell}\Gamma^1\right) + \frac{\theta_{\rm in}\theta_{\rm out}}{4}.
\end{equation}
On a circle of axisymmetry, eigenvalues of these operators are determined by eigenvalues of charged versions of the intrinsic Laplacian.
We label the eigenvalues of these Laplacians by an index $n$, and denote the corresponding eigenvalues of ${\cal O}^\pm$ by $\Lambda^{\rm R/NS}_\pm(n)$.
In this setting,
\begin{equation}
    r^2 \Lambda^{\rm R/NS}_\pm(n) = \left(n+\frac{\delta^{\rm R/NS}}{2}-q GQ^\pm\right)^2 +2G\left(\varpi\pm \frac{J_{\rm Komar}}{\ell}\right),
\end{equation}
where $\varpi$ is, again, the renormalised Hawking mass of \eqref{eq:renormed hawking mass} and $J_{\rm Komar}$ the Komar angular momentum.
Additionally, we have defined
\begin{equation}
    GQ^\pm \equiv \frac{1}{2\pi}\int_S A^\pm,
\end{equation}
to be the quasi-local charge associated with the gauge fields.
We can prove a non-negativity theorem for the eigenvalues  $\Lambda^{\rm R/NS}_\pm(n)$ as in Theorem \ref{thm:bps}.

On a spacelike orbit of the axial Killing field in a charged BTZ spacetime, the renormalised Hawking mass and Komar angular momentum evaluate to the uncharged BTZ mass and angular momentum as in equation \eqref{BTZMJ} (as the line element \eqref{eq:btz metric} is unchanged by the addition of the gauge fields). Hence these satisfy the uncharged BPS bound $\varpi\ge|J_{\rm Komar}|/\ell$. Setting $q=1$, the condition that one of the operators admits a zero mode is then equivalent to the mass and angular momentum parameters in the line element satisfying the usual extremality condition for BTZ (i.e. $\varpi=|J_{\rm Komar}|/\ell$), but also that (one of) $GQ^\pm \in \mathbb{Z} + \delta^{\rm R/NS}/2$ \cite{Izquierdo:1994jz, Banados:2015tft}.


 \bibliographystyle{JHEP}
 \bibliography{biblio.bib}



\end{document}